\title{Accurate computation of Galerkin double surface integrals in the 3-D boundary element method}
\author{Ross Adelman\footnote{rna@umd.edu, also at Department of Computer Science, University of Maryland, College Park}, Nail A. Gumerov\footnote{gumerov@umiacs.umd.edu, also at Fantalgo, LLC}, and Ramani Duraiswami\footnote{ramani@umiacs.umd.edu, also at Department of Computer Science, University of Maryland, College Park and Fantalgo, LLC}\\Institute for Advanced Computer Studies, University of Maryland, College Park}
\date{}
\theoremstyle{definition}
\newtheorem{theorem}{Theorem}
\theoremstyle{plain}
\renewenvironment{proof}[1][\proofname] {\par\pushQED{\qed}\normalfont\topsep6\p@\@plus6\p@\relax\trivlist\item[\hskip\labelsep\bfseries#1\@addpunct{.}]\ignorespaces}{\popQED\endtrivlist\@endpefalse}
\begin{document}

\maketitle

\section*{Abstract}

Many boundary element integral equation kernels are based on the Green's functions of the Laplace and Helmholtz equations in three dimensions.
These include, for example, the Laplace, Helmholtz, elasticity, Stokes, and Maxwell's equations.
Integral equation formulations lead to more compact, but dense linear systems.
These dense systems are often solved iteratively via Krylov subspace methods, which may be accelerated via the fast multipole method.
There are advantages to Galerkin formulations for such integral equations, as they treat problems associated with kernel singularity, and lead to symmetric and better conditioned matrices.
However, the Galerkin method requires each entry in the system matrix to be created via the computation of a double surface integral over one or more pairs of triangles.
There are a number of semi-analytical methods to treat these integrals, which all have some issues, and are discussed in this paper.
We present novel methods to compute all the integrals that arise in Galerkin formulations involving kernels based on the Laplace and Helmholtz Green's functions to any specified accuracy.
Integrals involving completely geometrically separated triangles are non-singular and are computed using a technique based on spherical harmonics and multipole expansions and translations, which results in the integration of polynomial functions over the triangles.
Integrals involving cases where the triangles have common vertices, edges, or are coincident are treated via scaling and symmetry arguments, combined with automatic recursive geometric decomposition of the integrals.
Example results are presented, and the developed software is available as open source.

\section{Introduction}

The Galerkin boundary element method (BEM) is a powerful method for solving integral equations involving kernels based on the Laplace and/or Helmholtz equations' Green's functions in three dimensions \cite{harrington1968, pozrikids2002, gibson2008}.
When the boundary is discretized using triangular elements, constructing the system matrix requires computing double surface integrals over pairs of these triangles.
Because the kernels being integrated are singular, these integrals can be difficult to compute, especially when the two triangles are proximate, share a vertex, an edge, or are the same.
Depending on the relative geometry of the two triangles, there are many different methods for computing them.
For example, when the two triangles do not touch, the integral is completely regular and can be computed accurately via numerical means, e.g., Gaussian quadrature \cite{arcioni1997}.
Semi-analytical methods, where the inside integral is computed analytically and the outside integral is computed numerically, have also been proposed \cite{aimi2002, wang2003}.

However, when the two triangles share a vertex, an edge, or are the same, things become more complicated.
There are analytical expressions for the case when the two triangles are the same \cite{eibert1995}, but not for when they share only a vertex or an edge.
In these cases, the semi-analytical methods do not always work.
This is because, depending on the kernel being integrated, the inside integral can be hypersingular.
While there are analytical expressions available for them, they are singular along the corners and edges of the corresponding triangle.
When the two triangles share a vertex or an edge, these singularities are included in the outside integral.
The usual semi-analytical methods will not work in these cases because they are not designed to properly handle the singularities.

The double integrals are weakly singular, so while the inside integrals may be hypersingular and the expressions for them may be singular in some places, they are completely integrable.
Nevertheless, actually integrating them in practice can be hard.
Therefore, more sophisticated semi-analytical methods have been developed over the years.
These include:
singularity subtraction and ``to the boundary'' techniques \cite{gray2001, salvadori2001, jarvenpaa2006, fata2009, vipiana2013};
specialized quadrature methods that are designed for the singularities involved, such as those based on the double exponential formula \cite{polimeridis2010, lopez2011}; and
other regularization methods, such as the Duffy transformation \cite{mousavi2010, fata2010, polimeridis2011}.
Many of these methods work very well, but because they all attempt to tackle the singularity issue directly, their analysis is very involved.

In this paper, we present a method for computing the integrals that completely avoids the computation of singular integrals.
The approach relies on several scaling properties of the integrals and the kernels being integrated.
When two triangles share a vertex, an edge, or are the same, the integral is decomposed into several smaller integrals, some of which are related back to the original integral using simple analysis.
This is done in such a way so that only regular integrals need to be computed explicitly.
Any integrals involving singularities are computed implicitly during the procedure.
The regular integrals can be computed using standard semi-analytical methods, but in this paper, we also present an analytical method for doing so.
This method uses spherical harmonics and multipole and local expansions and translations.
The only source of error in the method is from truncating these expansions.
However, this error is precisely controlled by choosing the appropriate truncation number or recursively subdiving the problem.
Although we developed these methods primarily for kernels related to the Laplace equation's Green's function, we also describe how they can be extended to kernels related to the Helmholtz equation's Green's function.
Finally, we have implemented a Galerkin BEM library in MATLAB for the Laplace equation using these two methods.
We show some example problems and provide some error analysis.
We have made this library freely available for download \cite{galerkin_webpage}.
The library can be used to recreate all the examples seen in this paper or to create entirely new ones.

\section{Background}
\label{background_sec}

\begin{figure}[t]
	\centering
	\includegraphics[scale=0.75]{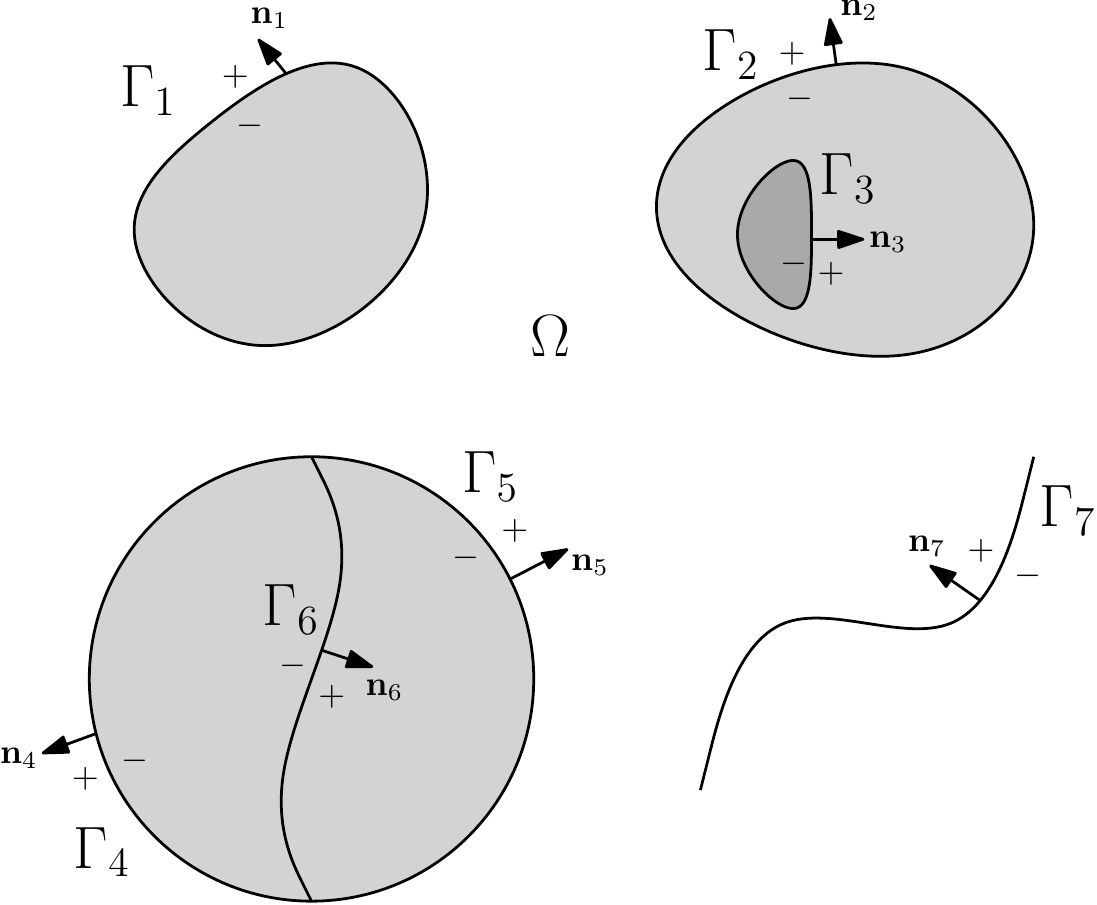}
	\caption{Several different types of boundaries: $\Gamma_1$ is closed; $\Gamma_2$ is closed, but contains another boundary, $\Gamma_3$, also closed; $\Gamma_4$ and $\Gamma_5$ form a closed region, but this region is divided into two by $\Gamma_6$; and $\Gamma_7$ is open.}
	\label{example_geometry}
\end{figure}

The Galerkin BEM is a powerful method for solving integral equations related to the Laplace or Helmholtz Green's functions in three dimensions.
Consider the following boundary value problem (BVP):
\begin{equation}
\nabla^2\phi\left(\mathbf{x}\right) = 0\quad\text{or}\quad\nabla^2\phi\left(\mathbf{x}\right) + k^2\phi\left(\mathbf{x}\right) = 0,\quad\mathbf{x} \in \Omega,
\end{equation}
where $k > 0$ is the wavenumber in the Helmholtz equation.
General boundary conditions are enforced on both sides of the boundary, and are given by:
\begin{equation}
\alpha_1^+\left(\mathbf{x}\right)\phi^+\left(\mathbf{x}\right) +
\beta_1^+\left(\mathbf{x}\right)q^+\left(\mathbf{x}\right) +
\alpha_1^-\left(\mathbf{x}\right)\phi^-\left(\mathbf{x}\right) +
\beta_1^-\left(\mathbf{x}\right)q^-\left(\mathbf{x}\right) =
\gamma_1\left(\mathbf{x}\right),\quad\mathbf{x} \in \Gamma,
\end{equation}
\begin{equation}
\alpha_2^+\left(\mathbf{x}\right)\phi^+\left(\mathbf{x}\right) +
\beta_2^+\left(\mathbf{x}\right)q^+\left(\mathbf{x}\right) +
\alpha_2^-\left(\mathbf{x}\right)\phi^-\left(\mathbf{x}\right) +
\beta_2^-\left(\mathbf{x}\right)q^-\left(\mathbf{x}\right) =
\gamma_2\left(\mathbf{x}\right),\quad\mathbf{x} \in \Gamma.
\end{equation}
In addition, boundary conditions must be set at infinity.
For the Laplace equation, the potential should decay to zero at large distances.
For the Helmholtz equation, the potential should decay to zero at large distances as well, but should also be composed of outgoing waves only.
The Sommerfeld radiation condition provides such a constraint:
\begin{equation}
\lim_{\left|\mathbf{x}\right| \rightarrow \infty}\phi\left(\mathbf{x}\right) = 0,\quad
\lim_{\left|\mathbf{x}\right| \rightarrow \infty}\left|\mathbf{x}\right|\left(\frac{d\phi}{d\left|\mathbf{x}\right|}\left(\mathbf{x}\right) - ik\phi\left(\mathbf{x}\right)\right) = 0.
\end{equation}

The boundaries can be closed or open.
See Fig.\ \ref{example_geometry} for some example boundaries.
For closed boundaries, the interior side of the boundary is the ``$-$'' side and the exterior side is the ``$+$'' side.
For open boundaries, since there is no inside or outside, designating each side of the boundary as ``$+$'' or ``$-$'' can be done arbitrarily.
The values, $\phi^+$ and $\phi^-$, are the potential on the ``$+$'' and ``$-$'' sides of the surface, respectively.
Likewise, the values, $q^+$ and $q^-$, are the normal derivatives of the potential on the ``$+$'' and ``$-$'' sides of the surface, respectively, and are given by
\begin{equation}
q^\pm = \frac{\partial\phi^\pm}{\partial\mathbf{n}^\pm} = \left(\mathbf{n}^\pm\cdot\nabla_{\mathbf{x}}\right)\phi^\pm,
\end{equation}
where $\mathbf{n}^\pm = \mp\mathbf{n}$ (i.e., $\mathbf{n}^+$ goes from the ``$+$'' side to the ``$-$'' side, and vice versa).

To solve the BVP, we use an indirect boundary integral formulation called the layer potential formuation.
Using Green's theorem, the Laplace or Helmholtz equation is transformed from a differential equation into an integral equation:
\begin{equation}
\label{intgrl_form}
\phi\left(\mathbf{x}\right) = L\left[\sigma\right]\left(\mathbf{x}\right) + M\left[\mu\right]\left(\mathbf{x}\right),
\end{equation}
where
\begin{equation}
L\left[\sigma\right]\left(\mathbf{x}\right) = \int_{\mathbf{x}^\prime \in \Gamma}\sigma\left(\mathbf{x}^\prime\right)G\left(\mathbf{x} - \mathbf{x}^\prime\right)dS\left(\mathbf{x}^\prime\right),
\end{equation}
\begin{equation}
M\left[\mu\right]\left(\mathbf{x}\right) = \int_{\mathbf{x}^\prime \in \Gamma}\mu\left(\mathbf{x}^\prime\right)\left(\mathbf{n}^\prime\cdot\nabla_{\mathbf{x}^\prime}\right)G\left(\mathbf{x} - \mathbf{x}^\prime\right)dS\left(\mathbf{x}^\prime\right)
\end{equation}
are the single- and double-layer potentials \cite{burton1971}, and
\begin{equation}
G\left(\mathbf{r}\right) = \frac{1}{4\pi\left|\mathbf{r}\right|}\quad\text{or}\quad{}G\left(\mathbf{r}\right) = \frac{\exp\left(ik\left|\mathbf{r}\right|\right)}{4\pi\left|\mathbf{r}\right|}
\end{equation}
is the Laplace or Helmholtz equation's Green's function.
The single-layer potential, $L\left[\sigma\right]\left(\mathbf{x}\right)$, is the potential due to the monopole source density distribution, $\sigma\left(\mathbf{x}^\prime\right)$, on the boundary.
Likewise, the double-layer potential, $M\left[\mu\right]\left(\mathbf{x}\right)$, is the potential due to the dipole source density distribution, $\mu\left(\mathbf{x}^\prime\right)$, on the boundary.
In the differential equation, we seek a solution to the potential governed by the Laplace equation.
However, in the integral equation, we seek the source density distributions, $\sigma\left(\mathbf{x}^\prime\right)$ and $\mu\left(\mathbf{x}^\prime\right)$, on the boundary that give rise to that potential.
The advantage of the BEM is that the expression in Eq.\ (\ref{intgrl_form}) relating the source density distributions back to the potential automatically satisfies the original differential equation.
Moreover, the Green's functions satisfy the boundary conditions at infinity, so as long as the source density distributions are bounded and finite, the single- and double-layer potentials will satsify them as well.
Thus, we need only concern ourselves with searching for the source density distributions that satisfy the remaining boundary conditions.
To do this, we need to express the potentials and normal derivatives on either side of the boundary in terms of the source density distributions.
Jump conditions provide such a relationship:
\begin{equation}
\label{phipm}
\phi^\pm\left(\mathbf{x}\right) = L\left[\sigma\right]\left(\mathbf{x}\right) + M\left[\mu\right]\left(\mathbf{x}\right) \pm \frac{1}{2}\mu\left(\mathbf{x}\right),
\end{equation}
\begin{equation}
\label{qpm}
q^\pm\left(\mathbf{x}\right) = \mp{}L^\prime\left[\sigma\right]\left(\mathbf{x}\right) \mp M^\prime\left[\mu\right]\left(\mathbf{x}\right) + \frac{1}{2}\sigma\left(\mathbf{x}\right),
\end{equation}
where
\begin{equation}
L^\prime\left[\sigma\right]\left(\mathbf{x}\right) = \left(\mathbf{n}\cdot\nabla_{\mathbf{x}}\right)\int_{\mathbf{x}^\prime \in \Gamma}\sigma\left(\mathbf{x}^\prime\right)G\left(\mathbf{x} - \mathbf{x}^\prime\right)dS\left(\mathbf{x}^\prime\right),
\end{equation}
\begin{equation}
M^\prime\left[\mu\right]\left(\mathbf{x}\right) = \left(\mathbf{n}\cdot\nabla_{\mathbf{x}}\right)\int_{\mathbf{x}^\prime \in \Gamma}\mu\left(\mathbf{x}^\prime\right)\left(\mathbf{n}^\prime\cdot\nabla_{\mathbf{x}^\prime}\right)G\left(\mathbf{x} - \mathbf{x}^\prime\right)dS\left(\mathbf{x}^\prime\right).
\end{equation}
Plugging Eqs.\ (\ref{phipm}) and (\ref{qpm}) into the boundary conditions and rearranging,
\begin{equation}
a_1\left(L\left[\sigma\right] + M\left[\mu\right]\right) +
b_1\left(L^\prime\left[\sigma\right] + M^\prime\left[\mu\right]\right) +
c_1\sigma +
d_1\mu =
\gamma_1,
\end{equation}
\begin{equation}
a_2\left(L\left[\sigma\right] + M\left[\mu\right]\right) +
b_2\left(L^\prime\left[\sigma\right] + M^\prime\left[\mu\right]\right) +
c_2\sigma +
d_2\mu =
\gamma_2,
\end{equation}
where
\begin{equation}
a_1 = \alpha_1^+ + \alpha_1^-
,\quad
b_1 = -\beta_1^+ + \beta_1^-
,\quad
c_1 = \frac{1}{2}\left(\beta_1^+ + \beta_1^-\right)
,\quad
d_1 = \frac{1}{2}\left(\alpha_1^+ - \alpha_1^-\right),
\end{equation}
\begin{equation}
a_2 = \alpha_2^+ + \alpha_2^-
,\quad
b_2 = -\beta_2^+ + \beta_2^-
,\quad
c_2 = \frac{1}{2}\left(\beta_2^+ + \beta_2^-\right)
,\quad
d_2 = \frac{1}{2}\left(\alpha_2^+ - \alpha_2^-\right).
\end{equation}
In these expressions, we have dropped the argument, $\mathbf{x}$, to save space (i.e., $\sigma\left(\mathbf{x}\right)$ becomes $\sigma$).
In order to make the problem computationally tractable, the source density distributions, $\sigma\left(\mathbf{x}^\prime\right)$ and $\mu\left(\mathbf{x}^\prime\right)$, are each written as a linear combination of $N$ basis functions:
\begin{equation}
\sigma\left(\mathbf{x}^\prime\right) = \sum_{j = 1}^N\sigma_jf_j\left(\mathbf{x}^\prime\right),\quad
\mu\left(\mathbf{x}^\prime\right) = \sum_{j = 1}^N\mu_jf_j\left(\mathbf{x}^\prime\right).
\end{equation}
For constant triangular elements, there is one basis function per triangle that is equal to one on that element and zero everywhere else.
For linear triangular elements, there is one basis function per vertex that is equal to one at that vertex, zero at all the other vertices, and piecewise linear everywhere else.
We need to compute the coefficients of these basis functions so that the boundary conditions are satisfied.
In other words, we seek $\sigma_1, \sigma_2, \ldots, \alpha_N$, $\mu_1, \mu_2, \ldots,$ and $\mu_N$ such that
\begin{equation}
\sum_{j = 1}^N\sigma_j\left(a_1L\left[f_j\right] +
b_1L^\prime\left[f_j\right] +
c_1f_j\right) +
\sum_{j = 1}^N\mu_j\left(a_1M\left[f_j\right] +
b_1M^\prime\left[f_j\right] +
d_1\right) =
\gamma_1,
\end{equation}
\begin{equation}
\sum_{j = 1}^N\sigma_j\left(a_2L\left[f_j\right] +
b_2L^\prime\left[f_j\right] +
c_2f_j\right) +
\sum_{j = 1}^N\mu_j\left(a_2M\left[f_j\right] +
b_2M^\prime\left[f_j\right] +
d_2\right) =
\gamma_2.
\end{equation}
Or, more compactly,
\begin{equation}
\sum_{j = 1}^N\sigma_jA_1\left[f_j\right] +
\sum_{j = 1}^N\mu_jB_1\left[f_j\right] =
\gamma_1,
\end{equation}
\begin{equation}
\sum_{j = 1}^N\sigma_jA_2\left[f_j\right] +
\sum_{j = 1}^N\mu_jB_2\left[f_j\right] =
\gamma_2,
\end{equation}
where
\begin{equation}
A_1\left[f_j\right] =
a_1L\left[f_j\right] +
b_1L^\prime\left[f_j\right] +
c_1f_j
,\quad
B_1\left[f_j\right] =
a_1M\left[f_j\right] +
b_1M^\prime\left[f_j\right] +
d_1,
\end{equation}
\begin{equation}
A_2\left[f_j\right] =
a_2L\left[f_j\right] +
b_2L^\prime\left[f_j\right] +
c_2f_j
,\quad
B_2\left[f_j\right] =
a_2M\left[f_j\right] +
b_2M^\prime\left[f_j\right] +
d_2.
\end{equation}

The two most commonly used methods for enforcing the boundary conditions are the collocation method and the Galerkin method.
The collocation method works by enforcing the boundary conditions at $N$ matching points:
\begin{equation}
\sum_{j = 1}^N\sigma_jA_1\left[f_j\right]\left(\mathbf{x}_i\right) +
\sum_{j = 1}^N\mu_jB_1\left[f_j\right]\left(\mathbf{x}_i\right) =
\gamma_1\left(\mathbf{x}_i\right),\quad{}i = 1, 2, \ldots, N,
\end{equation}
\begin{equation}
\sum_{j = 1}^N\sigma_jA_2\left[f_j\right]\left(\mathbf{x}_i\right) +
\sum_{j = 1}^N\mu_jB_2\left[f_j\right]\left(\mathbf{x}_i\right) =
\gamma_2\left(\mathbf{x}_i\right),\quad{}i = 1, 2, \ldots, N.
\end{equation}
The method is so named because these points are typically collocated with the modeling elements (e.g., in the case of constant triangular elements, one is placed at every triangle's centroid).
Collocation methods have long been used \cite{rao1979, rao1982}.
They are easy to understand, and the integral expressions necessary for implementing them have been studied and derived by many authors.
This includes, for example, piecewise constant and linear basis functions on triangular elements, which are the most commonly used \cite{wilton1984, davey1989, graglia1993, katz2001}.
However, they suffer from a few problems.
Many of the boundary integrals are hypersingular, which make them hard (or sometimes even impossible) to compute, especially for points on the corners or edges of the boundary.

The Galerkin method overcomes these problems by enforcing the boundary conditions in an integral sense.
The boundary integral equation is multiplied by each of the same $N$ basis functions from before and integrated over the boundary a second time:
\begin{equation}
\sum_{j = 1}^N\sigma_j\int_{\mathbf{x} \in \Gamma}f_iA_1\left[f_j\right]dS\left(\mathbf{x}\right) +
\sum_{j = 1}^N\mu_j\int_{\mathbf{x} \in \Gamma}f_iB_1\left[f_j\right]dS\left(\mathbf{x}\right) =
\int_{\mathbf{x} \in \Gamma}f_i\gamma_1dS\left(\mathbf{x}\right),\quad{}i = 1, 2, \ldots, N,
\end{equation}
\begin{equation}
\sum_{j = 1}^N\sigma_j\int_{\mathbf{x} \in \Gamma}f_iA_2\left[f_j\right]dS\left(\mathbf{x}\right) +
\sum_{j = 1}^N\mu_j\int_{\mathbf{x} \in \Gamma}f_iB_2\left[f_j\right]dS\left(\mathbf{x}\right) =
\int_{\mathbf{x} \in \Gamma}f_i\gamma_2dS\left(\mathbf{x}\right),\quad{}i = 1, 2, \ldots, N.
\end{equation}
By doing so, all the hypersingular integrals become weakly singular.
Morever, the system matrices in the Galerkin method are typically symmetric, better conditioned, and have better convergence properties \cite{natarajan1998, ademoyero2001}.
However, the extra integral over the boundary complicates the computation of the entries in the system matrix.

\section{Double Surface Integrals}
\label{double_surf_sec}

\begin{figure}[t]
	\centering
	\includegraphics[scale=0.75]{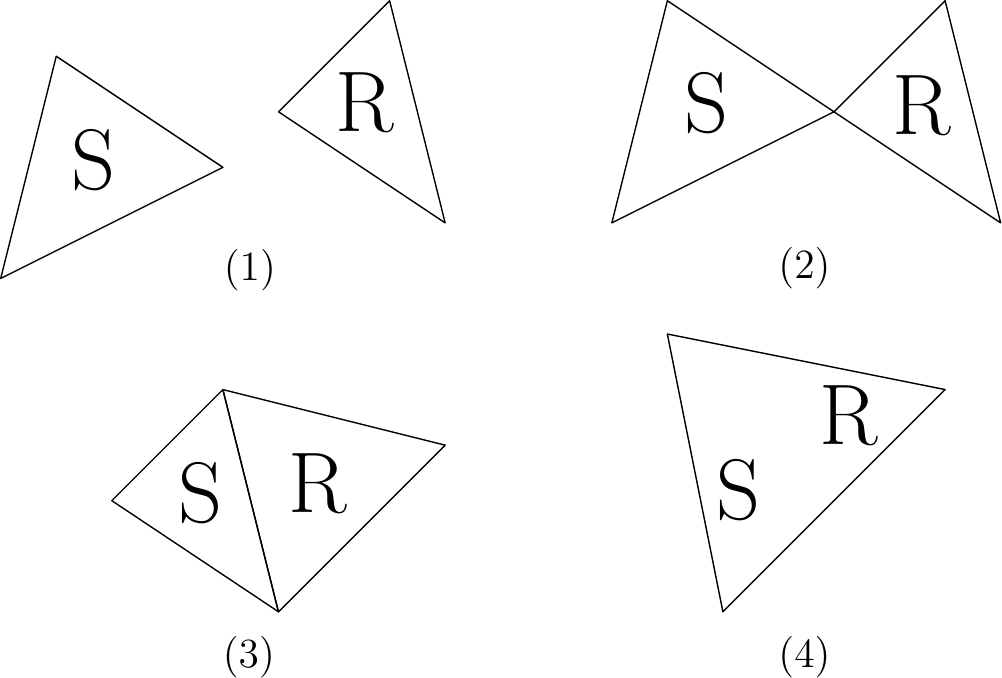}
	\caption{In practice, the relative geometry of two triangles is one of the following: (1) the two triangles do not touch; (2) they share a vertex; (3) they share an edge; or (4) they are the same.}
	\label{four_geometries}
\end{figure}

When the boundary is discretized using linear triangular elements, the double surface integrals are performed over pairs of these triangles.
In each pair, one is called the ``source'' triangle, and the other the ``receiver'' triangle.
The inside integral is over the source triangle, $\text{S}$, and the outside integral is over the receiver triangle, $\text{R}$.
Thus, when populating the system matrix, we need to compute integrals of the following form:
\begin{equation}
I = \int_{\mathbf{x} \in \text{R}}
\left(\sigma_0 + \mathbf{p}\cdot\mathbf{x}\right)
\int_{\mathbf{x}^\prime \in \text{S}}\left(\sigma_0^\prime + \mathbf{p}^\prime\cdot\mathbf{x}^\prime\right)F\left(\mathbf{x} - \mathbf{x}^\prime\right)dS\left(\mathbf{x}^\prime\right)dS\left(\mathbf{x}\right),
\end{equation}
where $\sigma_0^\prime + \mathbf{p}^\prime\cdot\mathbf{x}^\prime$ is the source density distribution over the source triangle, $\sigma_0 + \mathbf{p}\cdot\mathbf{x}$ is the weight function over the receiver triangle, and $F\left(\mathbf{r}\right)$ is the kernel being integrated.
To implement the Galerkin BEM described in Sec.\ \ref{background_sec}, we need to compute this integral for the following four kernels:
\begin{equation}
\label{F1eq}
F_1\left(\mathbf{x} - \mathbf{x}^\prime\right) = G\left(\mathbf{x} - \mathbf{x}^\prime\right),
\end{equation}
\begin{equation}
\label{F2eq}
F_2\left(\mathbf{x} - \mathbf{x}^\prime\right) = \left(\mathbf{n}^\prime\cdot\nabla_{\mathbf{x}^\prime}\right)G\left(\mathbf{x} - \mathbf{x}^\prime\right),
\end{equation}
\begin{equation}
\label{F3eq}
F_3\left(\mathbf{x} - \mathbf{x}^\prime\right) = \left(\mathbf{n}\cdot\nabla_{\mathbf{x}}\right)G\left(\mathbf{x} - \mathbf{x}^\prime\right),
\end{equation}
\begin{equation}
\label{F4eq}
F_4\left(\mathbf{x} - \mathbf{x}^\prime\right) = \left(\mathbf{n}\cdot\nabla_{\mathbf{x}}\right)\left(\mathbf{n}^\prime\cdot\nabla_{\mathbf{x}^\prime}\right)G\left(\mathbf{x} - \mathbf{x}^\prime\right),
\end{equation}
where $F_1\left(\mathbf{r}\right)$ and $F_2\left(\mathbf{r}\right)$ correspond to the single- and double-layer potentials, and $F_3\left(\mathbf{r}\right)$ and $F_4\left(\mathbf{r}\right)$ correspond to their normal derivatives.

Computing the integral for these four kernels for all commonly encountered geometries is the focus of this paper.
In practice, the relative geometry of the two triangles is one of the following:
(1) the two triangles do not touch;
(2) they share a vertex;
(3) they share an edge; or
(4) they are the same (see Fig.\ \ref{four_geometries}).
These are called the zero-, one-, two-, and three-touch cases, respectively.
In this naming scheme, the number represents how many vertices the two triangles share.
For example, in the one-touch case, the two triangles share a single vertex.

\section{Reduction of the Helmholtz Kernel to the Laplace Kernel}

\begin{figure}[t]
	\centering
	\includegraphics[scale=0.75]{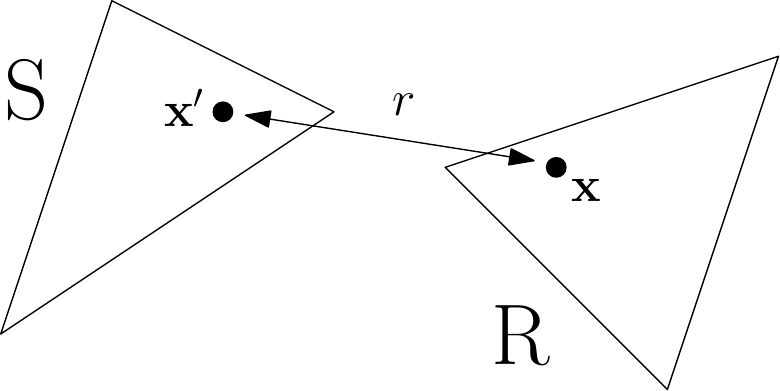}
	\caption{For the double surface integral of the Helmholtz equation's Green's function, assume the two triangles are small and close to each other.  In other words, for any two points, one on the source triangle and one on the receiver triangle, assume the distance between them, $r$, is less than $1 / k$.}
	\label{krlessthanone}
\end{figure}

In order to apply the Galerkin BEM to integral equations whose kernels are based on the Helmholtz Green's function, we need to compute double surface integrals of the four kernels given in Eqs.\ (\ref{F1eq}) - (\ref{F4eq}) over pairs of triangles.
The remainder of this paper describes several novel methods for computing these integrals when the Green's function is that of the Laplace equation.
These methods are not directly applicable to the Helmholtz Green's function.
However, they can be extended to the Helmholtz equation (and any other problem involving the Laplace and/or Helmholtz kernels, such as elastostatics, Stokes problems, and Maxwell's equations) via the singularity subtraction technique.

To reduce the Helmholtz kernel to the Laplace kernel, we use the singularity subtraction technique \cite{jarvenpaa2006}.
As an example, consider the following double surface integral over a pair of triangles that would appear in a Galerkin BEM for the Helmholtz equation:
\begin{equation}
I = \int_{\mathbf{x} \in \text{R}}\int_{\mathbf{x}^\prime \in \text{S}}\frac{\exp\left(ikr\right)}{4\pi{}r}dS\left(\mathbf{x}^\prime\right)dS\left(\mathbf{x}\right),
\end{equation}
where $r = \left|\mathbf{x} - \mathbf{x}^\prime\right|$.
Assume the two triangles are small and close to each other (i.e., $kr < 1$, see Fig.\ \ref{krlessthanone}).
The singularity subtraction technique works by separating the Helmholtz kernel into two pieces, a singular part (which happens to be the Laplace kernel) and a regular part.
This is done by expanding the numerator in the Helmholtz equation as a Taylor series expansion around $r = 0$:
\begin{equation}
I = \int_{\mathbf{x} \in \text{R}}\int_{\mathbf{x}^\prime \in \text{S}}\left(\sum_{n = 0}^\infty\frac{\left(ikr\right)^n}{n!}\right)\frac{1}{4\pi{}r}dS\left(\mathbf{x}^\prime\right)dS\left(\mathbf{x}\right),
\end{equation}
\begin{equation}
I = \int_{\mathbf{x} \in \text{R}}\int_{\mathbf{x}^\prime \in \text{S}}\frac{1}{4\pi{}r}dS\left(\mathbf{x}^\prime\right)dS\left(\mathbf{x}\right) +
\int_{\mathbf{x} \in \text{R}}\int_{\mathbf{x}^\prime \in \text{S}}\frac{1}{4\pi}\sum_{n = 1}^\infty\frac{ik\left(ikr\right)^{n - 1}}{n!}dS\left(\mathbf{x}^\prime\right)dS\left(\mathbf{x}\right).
\end{equation}
The integral on the left is of the Laplace kernel, which this paper describes how to compute accurately.
The integral on the right is completely regular, and so can be computed via standard numerical means.
Because we assumed that $kr < 1$, the Taylor series is quickly convergent, and for any given accuracy, only a few terms need to be kept.

\section{Zero-Touch Case}
\label{zerocase}

In the zero-touch case, because the two triangles do not touch, the double surface integral is regular and can be computed via standard numerical or semi-analytical means.
However, in this section, we present an analytical method for computing the integral in this case.
This method uses spherical harmonics and multipole and local expansions and translations.
Similar methods were presented in \cite{of2005, pham2012} as part of solvers for problems in elastostatics.
However, we build on the methods described in these references by:
(1) adapting them to the kernels considered in Secs.\ \ref{background_sec} and \ref{double_surf_sec};
(2) computing the multipole expansion coefficients for a triangle exactly using Gaussian quadrature; and
(3) controlling the error by adaptively truncating the multipole expansions and/or subdividing the problem when necessary.

\subsection{Spherical Harmonics}
\label{sph_harm_sec}

\begin{figure}[t]
	\centering
	\includegraphics[scale=0.75]{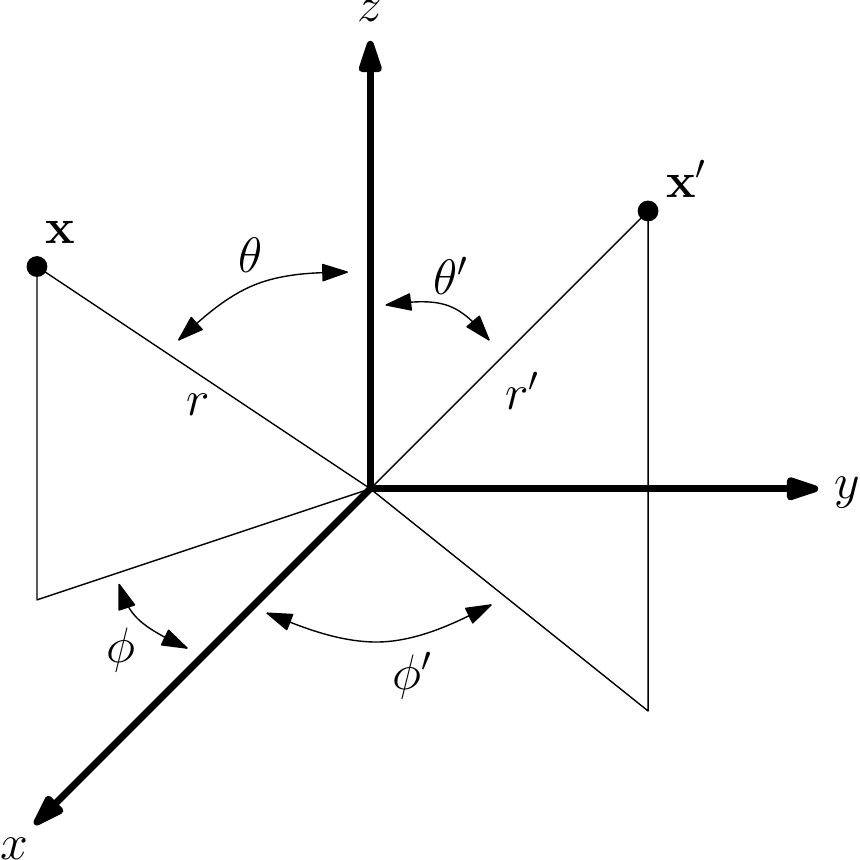}
	\caption{The spherical coordinate system.}
	\label{spherical_coords}
\end{figure}

The Laplace equation's Green's function can be expanded as
\begin{equation}
G\left(\mathbf{x} - \mathbf{x}^\prime\right) =
\frac{1}{4\pi\left|\mathbf{x} - \mathbf{x}^\prime\right|} =
\sum_{n = 0}^\infty
\sum_{m = -n}^n
\frac{1}{2n + 1}
\frac{r_<^n}{r_>^{n + 1}}
Y_n^{m\ast}\left(\theta^\prime, \phi^\prime\right)
Y_n^m\left(\theta, \phi\right),
\end{equation}
where $\left(r^\prime, \theta^\prime, \phi^\prime\right)$ and $\left(r, \theta, \phi\right)$ are the spherical coordinates of $\mathbf{x}^\prime$ and $\mathbf{x}$, respectively, $r_< = \min\left(r^\prime, r\right)$, and $r_> = \max\left(r^\prime, r\right)$ (see Fig.\ \ref{spherical_coords}).
The spherical harmonics are given by \cite{gumerov2005}
\begin{equation}
Y_n^m\left(\theta, \phi\right) = \left(-1\right)^m\left(\frac{2n + 1}{4\pi}\frac{\left(n - \left|m\right|\right)!}{\left(n + \left|m\right|\right)!}\right)^{1 / 2}P_n^{\left|m\right|}\left(\cos\left(\theta\right)\right)\exp\left(im\phi\right),
\end{equation}
where $P_n^m$ are the associated Legendre polynomials.

This expression can be used to build multipole and local expansions.
For example, suppose we want to compute the potential at $\mathbf{x}$ due to a point source at $\mathbf{x}^\prime$.
When $r > r^\prime$, we can build a multipole expansion:
\begin{equation}
\frac{1}{4\pi\left|\mathbf{x} - \mathbf{x}^\prime\right|} =
\sum_{n = 0}^\infty
\sum_{m = -n}^n
R_n^{m\ast}\left(r^\prime, \theta^\prime, \phi^\prime\right)
S_n^m\left(r, \theta, \phi\right).
\end{equation}
Here,
\begin{equation}
R_n^m\left(\mathbf{x}\right) = R_n^m\left(r, \theta, \phi\right) = \left(\frac{1}{2n + 1}\right)^{1 / 2}r^nY_n^m\left(\theta, \phi\right),
\end{equation}
\begin{equation}
S_n^m\left(\mathbf{x}\right) = S_n^m\left(r, \theta, \phi\right) = \left(\frac{1}{2n + 1}\right)^{1 / 2}\frac{1}{r^{n + 1}}Y_n^m\left(\theta, \phi\right)
\end{equation}
are the local and multipole expansion basis functions, respectively.
Instead of centering the expansion around the origin, we can center the expansion around $\mathbf{x}^\ast$.
When $\left|\mathbf{x} - \mathbf{x}^\ast\right| > \left|\mathbf{x}^\prime - \mathbf{x}^\ast\right|$,
\begin{equation}
\label{multipole_expansion}
\frac{1}{4\pi\left|\mathbf{x} - \mathbf{x}^\prime\right|} =
\sum_{n = 0}^\infty
\sum_{m = -n}^n
R_n^{m\ast}\left(\mathbf{x}^\prime - \mathbf{x}^\ast\right)
S_n^m\left(\mathbf{x} - \mathbf{x}^\ast\right).
\end{equation}
Likewise, when $\left|\mathbf{x} - \mathbf{x}^\ast\right| < \left|\mathbf{x}^\prime - \mathbf{x}^\ast\right|$, we can build a local expansion:
\begin{equation}
\frac{1}{4\pi\left|\mathbf{x} - \mathbf{x}^\prime\right|} =
\sum_{n = 0}^\infty
\sum_{m = -n}^n
S_n^{m\ast}\left(\mathbf{x}^\prime - \mathbf{x}^\ast\right)
R_n^m\left(\mathbf{x} - \mathbf{x}^\ast\right).
\end{equation}

These expressions can be used to build multipole and local expansions for arbitrary source distributions.
For example, let us build a multipole expansion for the source distribution, $\rho\left(\mathbf{x}^\prime\right)$, contained entirely inside an imaginary sphere of radius, $r^\ast$, centered around $\mathbf{x^\ast}$.
For a point, $\mathbf{x}$, outside the imaginary sphere, the potential due to this source distribution is given by
\begin{equation}
\Phi\left(\mathbf{x}\right) = \int_{\left|\mathbf{x}^\prime - \mathbf{x}^\ast\right| < r^\ast}\frac{\rho\left(\mathbf{x}^\prime\right)}{4\pi\left|\mathbf{x} - \mathbf{x}^\prime\right|}dV\left(\mathbf{x}^\prime\right).
\end{equation}
Plugging in Eq.\ (\ref{multipole_expansion}) and rearranging,
\begin{equation}
\Phi\left(\mathbf{x}\right) = \sum_{n = 0}^\infty
\sum_{m = -n}^na_n^m
S_n^m\left(\mathbf{x} - \mathbf{x}^\ast\right),
\end{equation}
where
\begin{equation}
a_n^m = \int_{\left|\mathbf{x}^\prime - \mathbf{x}^\ast\right| < r^\ast}\rho\left(\mathbf{x}^\prime\right)
R_n^{m\ast}\left(\mathbf{x}^\prime - \mathbf{x}^\ast\right)
dV\left(\mathbf{x}^\prime\right).
\end{equation}
We can build a local expansion using the same procedure.
Consider a different source distribution contained entirely outside the imaginary sphere.
For a point, $\mathbf{x}$, inside the imaginary sphere, the potential due to this source distribution is given by
\begin{equation}
\Phi\left(\mathbf{x}\right) = \sum_{n = 0}^\infty
\sum_{m = -n}^na_n^m
R_n^m\left(\mathbf{x} - \mathbf{x}^\ast\right),
\end{equation}
where
\begin{equation}
a_n^m = \int_{\left|\mathbf{x}^\prime - \mathbf{x}^\ast\right| > r^\ast}\rho\left(\mathbf{x}^\prime\right)
S_n^{m\ast}\left(\mathbf{x}^\prime - \mathbf{x}^\ast\right)
dV\left(\mathbf{x}^\prime\right).
\end{equation}

\subsection{Analytical Method}
\label{ana_method}

\begin{figure}[t]
	\centering
	\includegraphics[scale=0.75]{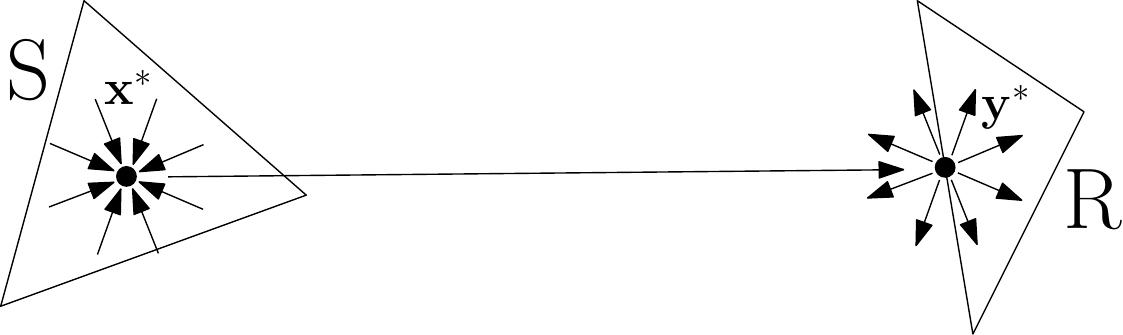}
	\caption{A diagram showing how to compute the double surface integral for two triangles that do not touch.}
	\label{zero_touch_problem}
\end{figure}

We want to compute the following double surface integral over a source triangle, $\text{S}$, and a receiver triangle, $\text{R}$:
\begin{equation}
I = \int_{\mathbf{x} \in \text{R}}
\left(\sigma_0 + \mathbf{p}\cdot\mathbf{x}\right)
\int_{\mathbf{x}^\prime \in \text{S}}\left(\sigma_0^\prime + \mathbf{p}^\prime\cdot\mathbf{x}^\prime\right)G\left(\mathbf{x} - \mathbf{x}^\prime\right)dS\left(\mathbf{x}^\prime\right)dS\left(\mathbf{x}\right).
\end{equation}

First, expand the Green's function as a multipole expansion:
\begin{equation}
\label{expression_expanded}
I = \int_{\mathbf{x} \in \text{R}}\left(\sigma_0 + \mathbf{p}\cdot\mathbf{x}\right)\int_{\mathbf{x}^\prime \in \text{S}}\left(\sigma_0^\prime + \mathbf{p}^\prime\cdot\mathbf{x}^\prime\right)\left(\sum_{n^\prime = 0}^\infty
\sum_{m^\prime = -n^\prime}^{n^\prime}
R_{n^\prime}^{m^\prime\ast}\left(\mathbf{x}^\prime - \mathbf{x}^\ast\right)
S_{n^\prime}^{m^\prime}\left(\mathbf{x} - \mathbf{x}^\ast\right)\right)dS\left(\mathbf{x}^\prime\right)dS\left(\mathbf{x}\right),
\end{equation}
where the expansion center, $\mathbf{x}^\ast$, is near the source triangle.
Ideally, $\mathbf{x}^\ast$ should be chosen so that the sphere centered around $\mathbf{x}^\ast$ and completely containing the source triangle can be made as small as possible.
There are actually only four possible choices of $\mathbf{x}^\ast$: the midpoints of the three edges of the source triangle and the center of the source triangle's circumsphere.
The one corresponding to the smallest sphere that completely contains the source triangle is chosen.

Second, rearrange Eq.\ (\ref{expression_expanded}) by moving the double sum and the $S_{n^\prime}^{m^\prime}$ outside the inside integral:
\begin{equation}
I = \int_{\mathbf{x} \in \text{R}}\left(\sigma_0 + \mathbf{p}\cdot\mathbf{x}\right)\left(\sum_{n^\prime = 0}^\infty
\sum_{m^\prime = -n^\prime}^{n^\prime}\left(\int_{\mathbf{x}^\prime \in \text{S}}\left(\sigma_0^\prime + \mathbf{p}^\prime\cdot\mathbf{x}^\prime\right)
R_{n^\prime}^{m^\prime\ast}\left(\mathbf{x}^\prime - \mathbf{x}^\ast\right)
dS\left(\mathbf{x}^\prime\right)\right)S_{n^\prime}^{m^\prime}\left(\mathbf{x} - \mathbf{x}^\ast\right)\right)dS\left(\mathbf{x}\right).
\end{equation}
The integral over the source triangle now computes the expansion coefficients for the multipole expansion that represents the potential due to the linear source distribution over the source triangle:
\begin{equation}
I = \int_{\mathbf{x} \in \text{R}}\left(\sigma_0 + \mathbf{p}\cdot\mathbf{x}\right)\left(\sum_{n^\prime = 0}^\infty
\sum_{m^\prime = -n^\prime}^{n^\prime}a_{n^\prime}^{m^\prime}S_{n^\prime}^{m^\prime}\left(\mathbf{x} - \mathbf{x}^\ast\right)\right)dS\left(\mathbf{x}\right),
\end{equation}
where
\begin{equation}
\label{anm_expression}
a_{n^\prime}^{m^\prime} = \int_{\mathbf{x}^\prime \in \text{S}}\left(\sigma_0^\prime + \mathbf{p}^\prime\cdot\mathbf{x}^\prime\right)
R_{n^\prime}^{m^\prime\ast}\left(\mathbf{x}^\prime - \mathbf{x}^\ast\right)
dS\left(\mathbf{x}^\prime\right).
\end{equation}
There are many analytical expressions available for computing the integral in Eq.\ (\ref{anm_expression}), including those in \cite{newman1986, nabors1993}.
In \cite{barrett2014}, the authors presented a recursive algorithm for computing the expansion coefficients: only $a_{n^\prime}^{m^\prime}$ for lower order and degree need to be computed explicity; the others can be computed recursively from them.
However, we make the following observation: $R_{n^\prime}^{m^\prime}$ is polynomial, so the integrand in Eq.\ (\ref{anm_expression}) is polynomial.
Thus, the integral can be computed exactly via Gaussian quadrature.
A similar approach was used in \cite{gumerov2013, gumerov2014}, although the integration domains in these references were lines and boxes, not triangles.
We use the techniques given in \cite{deng2010} for performing Gaussian quadrature over the triangles.

Third, translate the multipole expansion centered around $\mathbf{x}^\ast$ to a local expansion centered around $\mathbf{y}^\ast$, where $\mathbf{y}^\ast$ is near the receiver triangle \cite{gumerov2005}:
\begin{equation}
\label{local_expansion}
I = \int_{\mathbf{x} \in \text{R}}\left(\sigma_0 + \mathbf{p}\cdot\mathbf{x}\right)\left(\sum_{n = 0}^\infty
\sum_{m = -n}^nb_n^mR_n^m\left(\mathbf{x} - \mathbf{y}^\ast\right)\right)dS\left(\mathbf{x}\right).
\end{equation}
Ideally, $\mathbf{y}^\ast$ should be chosen so that the sphere centered around $\mathbf{y}^\ast$ and completely containing the receiver triangle can be made as small as possible.
The procedure for choosing $\mathbf{y}^\ast$ is the same as for choosing $\mathbf{x}^\ast$.

Fourth, rearrange Eq.\ (\ref{local_expansion}) by moving the double sum and the expansion coefficients outside the integral:
\begin{equation}
I = \sum_{n = 0}^\infty
\sum_{m = -n}^n
b_n^m
\left(
\int_{\mathbf{x} \in \text{R}}
\left(\sigma_0 + \mathbf{p}\cdot\mathbf{x}\right)
R_n^m\left(\mathbf{x} - \mathbf{y}^\ast\right)
dS\left(\mathbf{x}\right)
\right),
\end{equation}
\begin{equation}
\label{last_expression}
I = \sum_{n = 0}^\infty
\sum_{m = -n}^n
b_n^m
c_n^m,
\end{equation}
where
\begin{equation}
\label{cnm_expression}
c_n^m =
\int_{\mathbf{x} \in \text{R}}
\left(\sigma_0 + \mathbf{p}\cdot\mathbf{x}\right)
R_n^m\left(\mathbf{x} - \mathbf{y}^\ast\right)
dS\left(\mathbf{x}\right).
\end{equation}
Like before, since $R_n^m$ is polynomial, the integral in Eq.\ (\ref{cnm_expression}) can be computed exactly via Gaussian quadrature.

\subsection{Error Control}

\begin{figure}[t]
	\centering
	\includegraphics[scale=0.75]{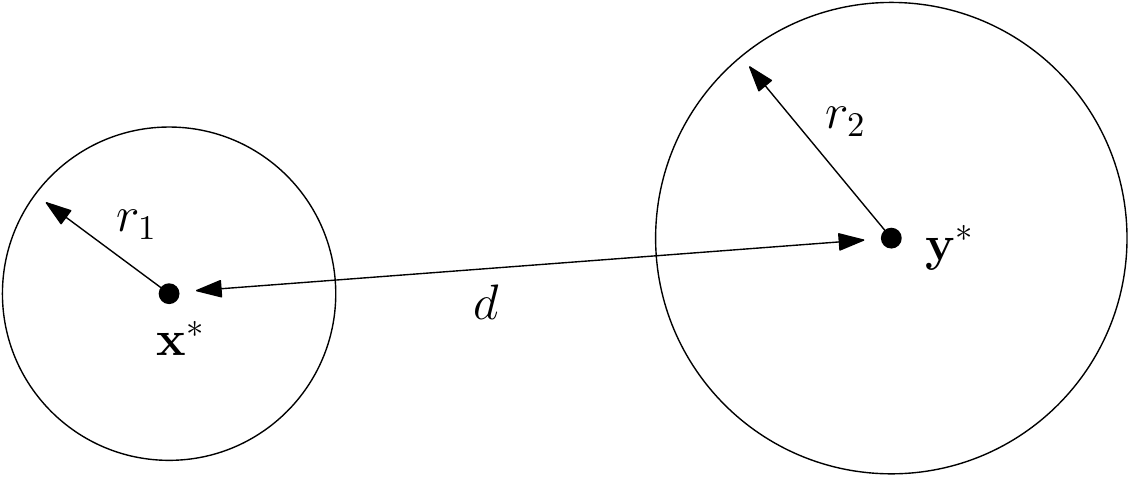}
	\caption{The radii and distances used to compute the geometric factor, $\eta$, given in Eq.\ (\ref{eta_def_eq}).}
	\label{eta_def}
\end{figure}

\begin{figure}[t]
	\centering
	\includegraphics[scale=0.5]{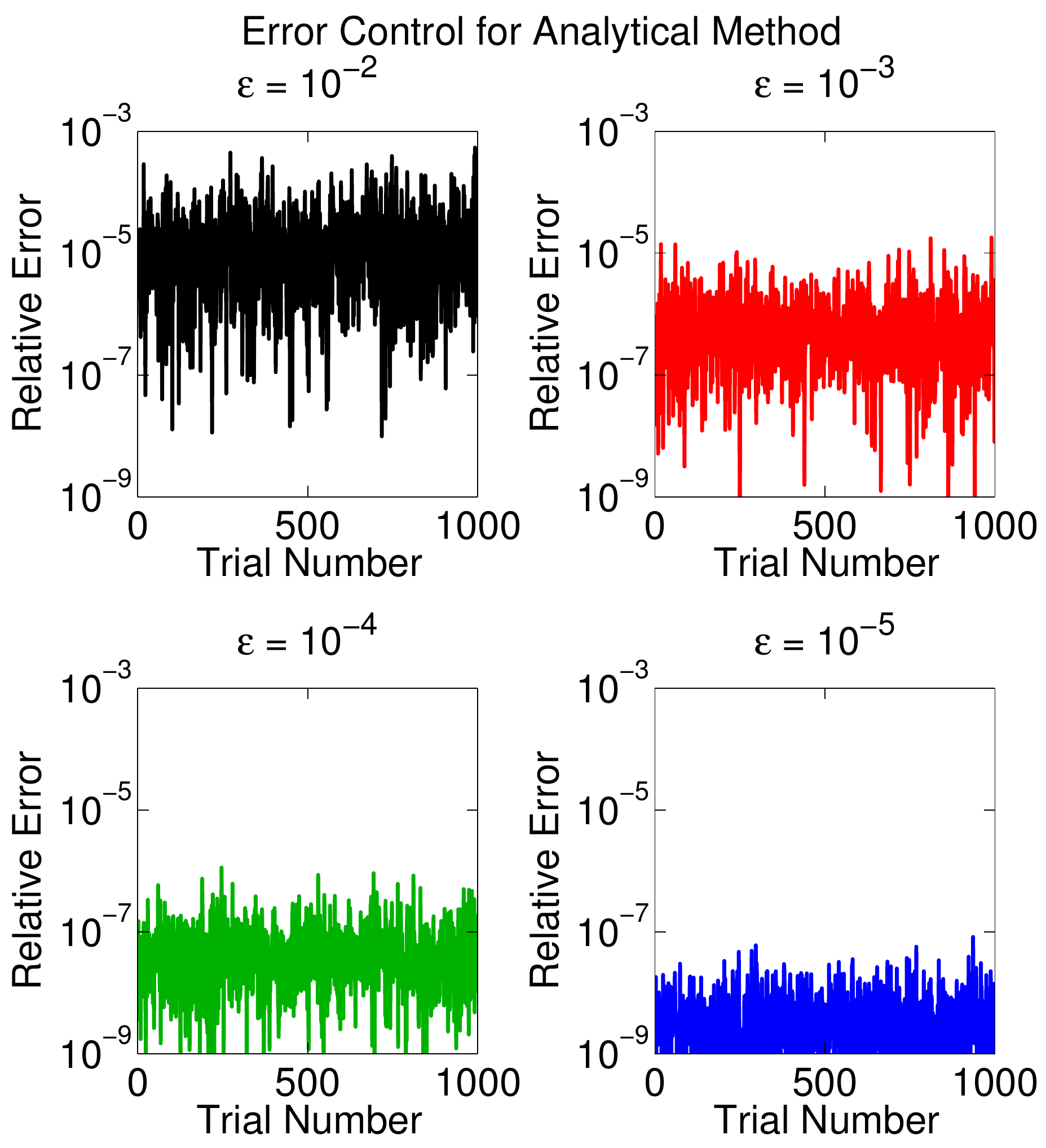}
	\caption{This graph shows the actual relative errors when computing the double surface integrals over 1,000 randomly generated pairs of triangles using the analytical method for several different error bounds.  In all cases, these errors were below the error bounds.}
	\label{validate_errors_linear}
\end{figure}

The expressions derived in Secs.\ \ref{sph_harm_sec} and \ref{ana_method} involving multipole and local expansions must be truncated so that they can be implemented in code.
For example, the last expression given in Sec.\ \ref{ana_method} becomes
\begin{equation}
\label{truncated_I_p}
I \approx I_p = \sum_{n = 0}^{p - 1}
\sum_{m = -n}^n
b_n^m
c_n^m,
\end{equation}
where only $p^2$ terms have been kept.
Obviously, the expression is exact as $p \rightarrow \infty$, but there are truncation errors when $p < \infty$.
These errors come from two sources:
(1) the construction of the multipole expansion at the source triangle; and
(2) the translation of the multipole expansion to a local expansion at the receiver triangle.
Luckily, these expansions converge rapidly in the case of the Laplace equation, so these errors can be precisely controlled by picking an appropriate value of $p$.

Theoretical bounds as a function of $p$ for these errors have been derived by many authors over the years.
A good overview is given in \cite{gumerov2008}.
The relative error is bounded by:
\begin{equation}
\label{error_bound_equation}
\varepsilon = \left|\frac{I_p - I}{I}\right| \le A\eta^p,
\end{equation}
where the error constant, $A$, depends on the problem being solved, and
\begin{equation}
\label{eta_def_eq}
\eta = \frac{\max\left(r_1, r_2\right)}{d - \min\left(r_1, r_2\right)},
\end{equation}
where $r_1$ is the radius of the multipole expansion's bounding sphere, $r_2$ is the radius of the local expansion's bounding sphere, and $d = \left|\mathbf{y}^\ast - \mathbf{x}^\ast\right|$ (see Fig.\ \ref{eta_def}).
Using Eq.\ (\ref{error_bound_equation}), we can easily pick a truncation number that gives us a desired accuracy:
\begin{equation}
p = \left\lceil\left.\log\left(\cfrac{\varepsilon}{A}\right)\middle/\log\left(\eta\right)\right.\right\rceil,
\end{equation}
where $\left\lceil{}x\right\rceil$ is the ceiling of $x$.

There are two issues.
First, we may not always be able to use as high a value of $p$ as we want.
Since the number of terms in Eq.\ (\ref{truncated_I_p}) grows as $p^2$, the memory and computational costs grow as $p^2$ as well.
Therefore, for practical reasons, $p$ must be capped.
In the event that $p$ needs to be higher than this cap, we divide the larger triangle and recurse.
We do the same thing in the event that the two bounding spheres overlap.

Second, while values of $A$ have been derived for special cases, such as point sources, no such values have been derived for the case of triangles.
Instead of attempting to derive such a value analytically, we computed one experimentally.
The experiment worked as follows.
We generated 10,000 pairs of randomly placed triangles, where, in each pair, the two triangles did not touch.
For each pair, we computed $I$ using a semi-analytical method (using a high-order Gaussian quadrature for the outside integral) and $I_p$ for $p = 1, 2, \ldots, 10$.
We chose a value for $A$ so that the experimental data satisfied Eq.\ (\ref{error_bound_equation}).
We ran this experiment several times, and found $A$ to be between $0.5$ and $0.6$, so we set $A = 0.6$.

We ran a second experiment to verify that $A = 0.6$ works well.
The experiment worked as follows.
We generated 1,000 pairs of triangles like before.
For each pair, we computed $I$ using the analytical method for several different choices of $\varepsilon$: $10^{-2}$, $10^{-3}$, $10^{-4}$, and $10^{-5}$.
Again, we used the value of $I$ returned by the semi-analytical method as the reference value.
In Fig.\ \ref{validate_errors_linear}, the actual errors (along the $y$ axis) for each pair of triangles (along the $x$ axis) are plotted for each choice of $\epsilon$ (the different colored curves).
In all cases, the realized errors are below the desired error bounds.

\section{One-, Two-, and Three-Touch Cases}
\label{onetwothreecase}

In the one-, two-, and three-touch cases, because the two triangles touch, the double surface integrals are not regular, so they cannot be computed using standard numerical or semi-analytical means, or even the analytical method described in Sec.\ \ref{ana_method}.
In this section, we present a method for dealing with these integrals.
The approach relies on several scaling properties of the integrals and the kernels being integrated.
The method works in roughly the following way:
(1) the integral is broken up into several smaller integrals;
(2) some of these integrals are written in terms of the original integral using some simple analysis; and
(3) the terms are rearranged to yield an expression for the original integral that only requires computing regular integrals explicitly (all other integrals are computed implicitly).

\subsection{Preliminaries: Scaling Results}

\begin{figure}[t]
	\centering
	\includegraphics[scale=0.75]{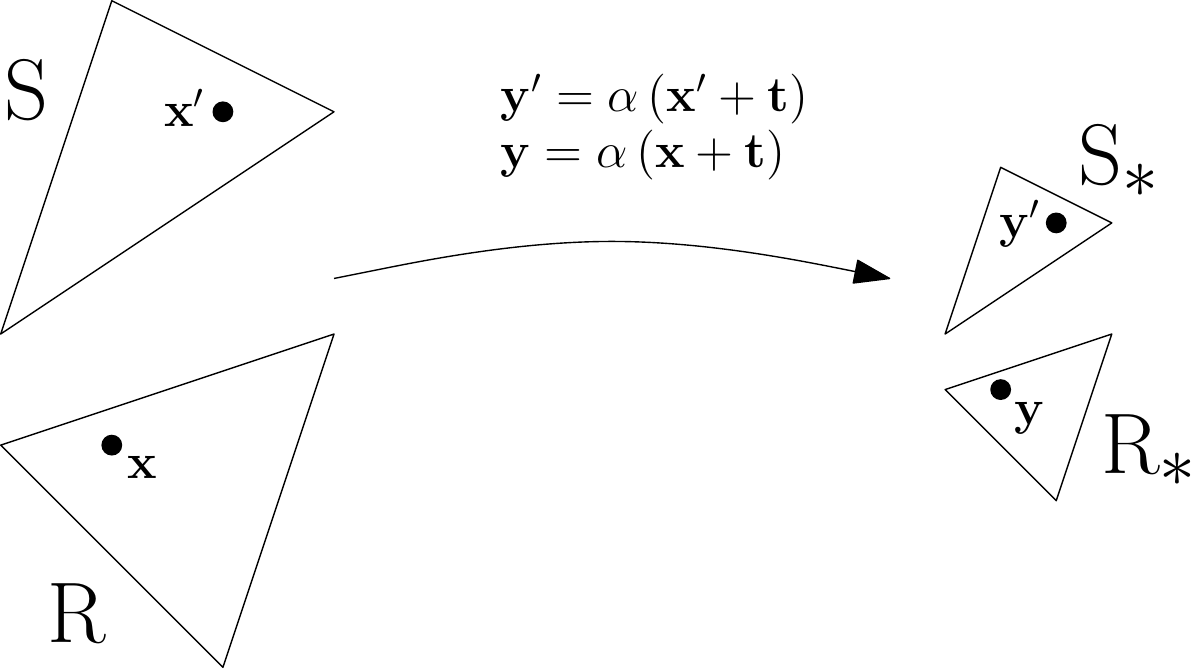}
	\caption{A diagram showing the process for transforming the pair of triangles, $\text{S}$ and $\text{R}$, into the pair of triangles, $\text{S}_\ast$ and $\text{R}_\ast$.}
	\label{preliminaries}
\end{figure}

We want to compute the following double surface integral over a source triangle, $\text{S}$, and a receiver triangle, $\text{R}$:
\begin{equation}
I = \int_{\mathbf{x} \in \text{R}}\left(\sigma_0 + \mathbf{p}\cdot\mathbf{x}\right)\int_{\mathbf{x}^\prime \in \text{S}}\left(\sigma_0^\prime + \mathbf{p}^\prime\cdot\mathbf{x}^\prime\right)F\left(\mathbf{x} - \mathbf{x}^\prime\right)dS\left(\mathbf{x}^\prime\right)dS\left(\mathbf{x}\right),
\end{equation}
where $F\left(\mathbf{r}\right)$ is any kernel that has the following scaling property:
\begin{equation}
\label{scalingproperty}
F\left(\alpha\mathbf{r}\right) = s\left(\alpha\right)F\left(\mathbf{r}\right).
\end{equation}
To begin, let us break the integral into four smaller integrals by expanding the product of the two linear functions:
\begin{equation}
I = \int_{\mathbf{x} \in \text{R}}\int_{\mathbf{x}^\prime \in \text{S}}\left(
\sigma_0\sigma_0^\prime +
\sigma_0\left(\mathbf{p}^\prime\cdot\mathbf{x}^\prime\right) +
\left(\mathbf{p}\cdot\mathbf{x}\right)\sigma_0^\prime +
\left(\mathbf{p}\cdot\mathbf{x}\right)\left(\mathbf{p}^\prime\cdot\mathbf{x}^\prime\right)
\right)
F\left(\mathbf{x} - \mathbf{x}^\prime\right)dS\left(\mathbf{x}^\prime\right)dS\left(\mathbf{x}\right),
\end{equation}
\begin{equation}
I = \sigma_0\sigma_0^\prime{}I^1 + \sigma_0I^{\mathbf{p}^\prime} + \sigma_0^\prime{}I^{\mathbf{p}} + I^{\mathbf{p}^\prime\mathbf{p}},
\end{equation}
where
\begin{equation}
I^1 = \int_{\mathbf{x} \in \text{R}}\int_{\mathbf{x}^\prime \in \text{S}}F\left(\mathbf{x} - \mathbf{x}^\prime\right)dS\left(\mathbf{x}^\prime\right)dS\left(\mathbf{x}\right),
\end{equation}
\begin{equation}
I^{\mathbf{p}^\prime} = \int_{\mathbf{x} \in \text{R}}\int_{\mathbf{x}^\prime \in \text{S}}\left(\mathbf{p}^\prime\cdot\mathbf{x}^\prime\right)F\left(\mathbf{x} - \mathbf{x}^\prime\right)dS\left(\mathbf{x}^\prime\right)dS\left(\mathbf{x}\right),
\end{equation}
\begin{equation}
I^{\mathbf{p}} = \int_{\mathbf{x} \in \text{R}}\int_{\mathbf{x}^\prime \in \text{S}}\left(\mathbf{p}\cdot\mathbf{x}\right)F\left(\mathbf{x} - \mathbf{x}^\prime\right)dS\left(\mathbf{x}^\prime\right)dS\left(\mathbf{x}\right),
\end{equation}
\begin{equation}
I^{\mathbf{p}^\prime\mathbf{p}} = \int_{\mathbf{x} \in \text{R}}\int_{\mathbf{x}^\prime \in \text{S}}\left(\mathbf{p}^\prime\cdot\mathbf{x}^\prime\right)\left(\mathbf{p}\cdot\mathbf{x}\right)F\left(\mathbf{x} - \mathbf{x}^\prime\right)dS\left(\mathbf{x}^\prime\right)dS\left(\mathbf{x}\right).
\end{equation}

Consider two different triangles, $\text{S}_\ast$ and $\text{R}_\ast$, which, taken together, are scaled and translated versions of $\text{S}$ and $\text{R}$, also taken together (see Fig.\ \ref{preliminaries}).
In other words, given a pair of points, $\mathbf{x}^\prime, \mathbf{x} \in \text{S} \cup \text{R}$, there is a corresponding pair of points, $\mathbf{y}^\prime, \mathbf{y} \in \text{S}_\ast \cup \text{R}_\ast$, given by
\begin{equation}
\label{changeofvars}
\mathbf{y}^\prime = \alpha\left(\mathbf{x}^\prime + \mathbf{t}\right),\quad\mathbf{y} = \alpha\left(\mathbf{x} + \mathbf{t}\right).
\end{equation}
Suppose we want to compute the same integral as before, except over $\text{S}_\ast$ and $\text{R}_\ast$:
\begin{equation}
I_\ast = \int_{\mathbf{y} \in \text{R}_\ast}\int_{\mathbf{y}^\prime \in \text{S}_\ast}\left(\sigma_0 + \mathbf{p}\cdot\mathbf{y}\right)\left(\sigma_0^\prime + \mathbf{p}^\prime\cdot\mathbf{y}^\prime\right)F\left(\mathbf{y} - \mathbf{y}^\prime\right)dS\left(\mathbf{y}^\prime\right)dS\left(\mathbf{y}\right).
\end{equation}
The integrand is exactly the same as before.
The only thing we are changing is the integration domain from $\text{S}\times\text{R}$ to $\text{S}_\ast\times\text{R}_\ast$.
Like before, let us break the integral into four smaller integrals:
\begin{equation}
I_\ast = \sigma_0\sigma_0^\prime{}I_\ast^1 + \sigma_0I_\ast^{\mathbf{p}^\prime} + \sigma_0^\prime{}I_\ast^{\mathbf{p}} + I_\ast^{\mathbf{p}^\prime\mathbf{p}},
\end{equation}
where
\begin{equation}
I_\ast^1 = \int_{\mathbf{y} \in \text{R}_\ast}\int_{\mathbf{y}^\prime \in \text{S}_\ast}F\left(\mathbf{y} - \mathbf{y}^\prime\right)dS\left(\mathbf{y}^\prime\right)dS\left(\mathbf{y}\right),
\end{equation}
\begin{equation}
I_\ast^{\mathbf{p}^\prime} = \int_{\mathbf{y} \in \text{R}_\ast}\int_{\mathbf{y}^\prime \in \text{S}_\ast}\left(\mathbf{p}^\prime\cdot\mathbf{y}^\prime\right)F\left(\mathbf{y} - \mathbf{y}^\prime\right)dS\left(\mathbf{y}^\prime\right)dS\left(\mathbf{y}\right),
\end{equation}
\begin{equation}
I_\ast^{\mathbf{p}} = \int_{\mathbf{y} \in \text{R}_\ast}\int_{\mathbf{y}^\prime \in \text{S}_\ast}\left(\mathbf{p}\cdot\mathbf{y}\right)F\left(\mathbf{y} - \mathbf{y}^\prime\right)dS\left(\mathbf{y}^\prime\right)dS\left(\mathbf{y}\right),
\end{equation}
\begin{equation}
I_\ast^{\mathbf{p}^\prime\mathbf{p}} = \int_{\mathbf{y} \in \text{R}_\ast}\int_{\mathbf{y}^\prime \in \text{S}_\ast}\left(\mathbf{p}^\prime\cdot\mathbf{y}^\prime\right)\left(\mathbf{p}\cdot\mathbf{y}\right)F\left(\mathbf{y} - \mathbf{y}^\prime\right)dS\left(\mathbf{y}^\prime\right)dS\left(\mathbf{y}\right).
\end{equation}

\begin{theorem}
\label{scaling_thm}
The four integrals over $\text{S}_\ast$ and $\text{R}_\ast$, $I_\ast^1$, $I_\ast^{\mathbf{p}^\prime}$, $I_\ast^{\mathbf{p}}$, and $I_\ast^{\mathbf{p}^\prime\mathbf{p}}$, can be expressed in terms of the four integrals over $\text{S}$ and $\text{R}$, $I^1$, $I^{\mathbf{p}^\prime}$, $I^{\mathbf{p}}$, and $I^{\mathbf{p}^\prime\mathbf{p}}$:
\begin{equation}
\label{I1ast}
I_\ast^1 = s\left(\alpha\right)\alpha^4I^1,
\end{equation}
\begin{equation}
\label{Ipprimeast}
I_\ast^{\mathbf{p}^\prime} = s\left(\alpha\right)\alpha^5\left(I^{\mathbf{p}^\prime} + \left(\mathbf{p}^\prime\cdot\mathbf{t}\right)I^1\right),
\end{equation}
\begin{equation}
\label{Ipast}
I_\ast^{\mathbf{p}} = s\left(\alpha\right)\alpha^5\left(I^{\mathbf{p}} + \left(\mathbf{p}\cdot\mathbf{t}\right)I^1\right),
\end{equation}
\begin{equation}
\label{Ipprimepast}
I_\ast^{\mathbf{p}^\prime\mathbf{p}} = s\left(\alpha\right)\alpha^6\left(
I^{\mathbf{p}^\prime\mathbf{p}} +
\left(\mathbf{p}\cdot\mathbf{t}\right)I^{\mathbf{p}^\prime} +
\left(\mathbf{p}^\prime\cdot\mathbf{t}\right)I^{\mathbf{p}} +
\left(\mathbf{p}\cdot\mathbf{t}\right)\left(\mathbf{p}^\prime\cdot\mathbf{t}\right)I^1
\right).
\end{equation}
\end{theorem}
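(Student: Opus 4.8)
The plan is to perform the change of variables \eqref{changeofvars} directly in each of the four integrals $I_\ast^1$, $I_\ast^{\mathbf{p}^\prime}$, $I_\ast^{\mathbf{p}}$, and $I_\ast^{\mathbf{p}^\prime\mathbf{p}}$, and then collect terms. First I would record the two ingredients that drive everything: under $\mathbf{y}^\prime = \alpha(\mathbf{x}^\prime+\mathbf{t})$, $\mathbf{y}=\alpha(\mathbf{x}+\mathbf{t})$, the surface measures scale as $dS(\mathbf{y}^\prime) = \alpha^2\, dS(\mathbf{x}^\prime)$ and $dS(\mathbf{y}) = \alpha^2\, dS(\mathbf{x})$ (since each triangle is scaled isotropically by $\alpha$ in its own plane), and the kernel argument transforms as $\mathbf{y}-\mathbf{y}^\prime = \alpha(\mathbf{x}-\mathbf{x}^\prime)$, so that by the scaling property \eqref{scalingproperty} we have $F(\mathbf{y}-\mathbf{y}^\prime) = s(\alpha)\,F(\mathbf{x}-\mathbf{x}^\prime)$. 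The combination of the two Jacobians already contributes a factor $\alpha^4$, and the kernel contributes $s(\alpha)$; these are common to all four integrals.

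For $I_\ast^1$ there are no linear factors, so substituting immediately gives $I_\ast^1 = s(\alpha)\alpha^4\int_{\mathbf{x}\in\text{R}}\int_{\mathbf{x}^\prime\in\text{S}}F(\mathbf{x}-\mathbf{x}^\prime)\,dS(\mathbf{x}^\prime)\,dS(\mathbf{x}) = s(\alpha)\alpha^4 I^1$, which is \eqref{I1ast}. For $I_\ast^{\mathbf{p}^\prime}$, I would substitute $\mathbf{p}^\prime\cdot\mathbf{y}^\prime = \alpha\,\mathbf{p}^\prime\cdot(\mathbf{x}^\prime+\mathbf{t}) = \alpha(\mathbf{p}^\prime\cdot\mathbf{x}^\prime) + \alpha(\mathbf{p}^\prime\cdot\mathbf{t})$; the extra factor $\alpha$ raises the prefactor to $\alpha^5$, and linearity of the integral splits it into the $\mathbf{p}^\prime\cdot\mathbf{x}^\prime$ piece (which is $I^{\mathbf{p}^\prime}$) plus the constant $(\mathbf{p}^\prime\cdot\mathbf{t})$ times the $F$-only integral (which is $I^1$), yielding \eqref{Ipprimeast}; \eqref{Ipast} is identical with primes removed. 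For $I_\ast^{\mathbf{p}^\prime\mathbf{p}}$, the product $(\mathbf{p}^\prime\cdot\mathbf{y}^\prime)(\mathbf{p}\cdot\mathbf{y}) = \alpha^2\big[(\mathbf{p}^\prime\cdot\mathbf{x}^\prime)+(\mathbf{p}^\prime\cdot\mathbf{t})\big]\big[(\mathbf{p}\cdot\mathbf{x})+(\mathbf{p}\cdot\mathbf{t})\big]$ contributes $\alpha^2$ (total prefactor $\alpha^6$) and expands into four terms; integrating each term against $s(\alpha)\alpha^6 F(\mathbf{x}-\mathbf{x}^\prime)$ and matching with the definitions of $I^{\mathbf{p}^\prime\mathbf{p}}$, $I^{\mathbf{p}^\prime}$, $I^{\mathbf{p}}$, $I^1$ gives \eqref{Ipprimepast}.

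The computation is essentially bookkeeping once the two scaling facts are in hand, so there is no deep obstacle; the one point that deserves care—and which I would state explicitly rather than wave through—is the claim $dS(\mathbf{y}) = \alpha^2\,dS(\mathbf{x})$. This requires that the map in \eqref{changeofvars} restricted to the plane of each triangle is a similarity with ratio $\alpha$; since the map is a uniform scaling of $\mathbb{R}^3$ composed with a translation, it carries the plane of $\text{S}$ to the plane of $\text{S}_\ast$ and acts on it as a dilation by $\alpha$, so areas scale by $\alpha^2$, and likewise for $\text{R}$. Once that is noted, the four identities follow by substitution and linearity, completing the proof.
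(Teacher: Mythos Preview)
Your proposal is correct and follows essentially the same approach as the paper: perform the change of variables \eqref{changeofvars}, use the scaling property \eqref{scalingproperty} of the kernel together with the $\alpha^2$ scaling of each surface element, and then expand the linear factors and identify the resulting pieces with $I^1$, $I^{\mathbf{p}^\prime}$, $I^{\mathbf{p}}$, and $I^{\mathbf{p}^\prime\mathbf{p}}$. Your explicit justification that $dS(\mathbf{y})=\alpha^2\,dS(\mathbf{x})$ is a nice touch that the paper leaves implicit.
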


\begin{proof}
To prove Eqs.\ (\ref{I1ast}) - (\ref{Ipprimepast}) in Theorem \ref{scaling_thm}, we are going to use the following procedure: (1) make the change of variables in Eq.\ (\ref{changeofvars}); (2) use the scaling property of $F\left(\mathbf{r}\right)$ in Eq.\ (\ref{scalingproperty}); and (3) break the resulting integral into one or more integrals that are equal to the original four integrals, $I^1$, $I^{\mathbf{p}^\prime}$, $I^{\mathbf{p}}$, and $I^{\mathbf{p}^\prime\mathbf{p}}$.
First, let us prove Eq.\ (\ref{I1ast}):
\begin{equation}
I_\ast^1 = \int_{\mathbf{y} \in \text{R}_\ast}\int_{\mathbf{y}^\prime \in \text{S}_\ast}F\left(\mathbf{y} - \mathbf{y}^\prime\right)dS\left(\mathbf{y}^\prime\right)dS\left(\mathbf{y}\right),
\end{equation}
\begin{equation}
I_\ast^1 = \int_{\mathbf{x} \in \text{R}}\int_{\mathbf{x}^\prime \in \text{S}}F\left(\alpha\left(\mathbf{x} + \mathbf{t}\right) - \alpha\left(\mathbf{x}^\prime + \mathbf{t}\right)\right)dS\left(\alpha\left(\mathbf{x}^\prime + \mathbf{t}\right)\right)dS\left(\alpha\left(\mathbf{x} + \mathbf{t}\right)\right),
\end{equation}
\begin{equation}
I_\ast^1 = \alpha^4\int_{\mathbf{x} \in \text{R}}\int_{\mathbf{x}^\prime \in \text{S}}F\left(\alpha\left(\mathbf{x} - \mathbf{x}^\prime\right)\right)dS\left(\mathbf{x}^\prime\right)dS\left(\mathbf{x}\right),
\end{equation}
\begin{equation}
I_\ast^1 = s\left(\alpha\right)\alpha^4\int_{\mathbf{x} \in \text{R}}\int_{\mathbf{x}^\prime \in \text{S}}F\left(\mathbf{x} - \mathbf{x}^\prime\right)dS\left(\mathbf{x}^\prime\right)dS\left(\mathbf{x}\right),
\end{equation}
\begin{equation}
I_\ast^1 = s\left(\alpha\right)\alpha^4I^1.
\end{equation}
Second, let us prove Eq.\ (\ref{Ipprimeast}):
\begin{equation}
I_\ast^{\mathbf{p}^\prime} = \int_{\mathbf{y} \in \text{R}_\ast}\int_{\mathbf{y}^\prime \in \text{S}_\ast}\left(\mathbf{p}^\prime\cdot\mathbf{y}^\prime\right)F\left(\mathbf{y} - \mathbf{y}^\prime\right)dS\left(\mathbf{y}^\prime\right)dS\left(\mathbf{y}\right),
\end{equation}
\begin{equation}
I_\ast^{\mathbf{p}^\prime} = s\left(\alpha\right)\alpha^5\int_{\mathbf{x} \in \text{R}}\int_{\mathbf{x}^\prime \in \text{S}}\left(\mathbf{p}^\prime\cdot\left(\mathbf{x}^\prime + \mathbf{t}\right)\right)F\left(\mathbf{x} - \mathbf{x}^\prime\right)dS\left(\mathbf{x}^\prime\right)dS\left(\mathbf{x}\right),
\end{equation}
\begin{equation}
I_\ast^{\mathbf{p}^\prime} = s\left(\alpha\right)\alpha^5\left(I^{\mathbf{p}^\prime} + \left(\mathbf{p}^\prime\cdot\mathbf{t}\right)I^1\right).
\end{equation}
Third, let us prove Eq.\ (\ref{Ipast}):
\begin{equation}
I_\ast^{\mathbf{p}} = \int_{\mathbf{y} \in \text{R}_\ast}\int_{\mathbf{y}^\prime \in \text{S}_\ast}\left(\mathbf{p}\cdot\mathbf{y}\right)F\left(\mathbf{y} - \mathbf{y}^\prime\right)dS\left(\mathbf{y}^\prime\right)dS\left(\mathbf{y}\right).
\end{equation}
The analysis is exactly the same as for $I^{\mathbf{p}^\prime}$, so
\begin{equation}
I_\ast^{\mathbf{p}} = s\left(\alpha\right)\alpha^5\left(I^{\mathbf{p}} + \left(\mathbf{p}\cdot\mathbf{t}\right)I^1\right).
\end{equation}
Fourth, let us prove Eq.\ (\ref{Ipprimepast}):
\begin{equation}
I_\ast^{\mathbf{p}^\prime\mathbf{p}} = \int_{\mathbf{y} \in \text{R}_\ast}\int_{\mathbf{y}^\prime \in \text{S}_\ast}\left(\mathbf{p}\cdot\mathbf{y}\right)\left(\mathbf{p}^\prime\cdot\mathbf{y}^\prime\right)F\left(\mathbf{y} - \mathbf{y}^\prime\right)dS\left(\mathbf{y}^\prime\right)dS\left(\mathbf{y}\right),
\end{equation}
\begin{equation}
I_\ast^{\mathbf{p}^\prime\mathbf{p}} = s\left(\alpha\right)\alpha^6\int_{\mathbf{x} \in \text{R}}\int_{\mathbf{x}^\prime \in \text{S}}\left(\mathbf{p}\cdot\left(\mathbf{x} + \mathbf{t}\right)\right)\left(\mathbf{p}^\prime\cdot\left(\mathbf{x}^\prime + \mathbf{t}\right)\right)F\left(\mathbf{x} - \mathbf{x}^\prime\right)dS\left(\mathbf{x}^\prime\right)dS\left(\mathbf{x}\right),
\end{equation}
\begin{equation}
I_\ast^{\mathbf{p}^\prime\mathbf{p}} = s\left(\alpha\right)\alpha^6\left(
I^{\mathbf{p}^\prime\mathbf{p}} +
\left(\mathbf{p}\cdot\mathbf{t}\right)I^{\mathbf{p}^\prime} +
\left(\mathbf{p}^\prime\cdot\mathbf{t}\right)I^{\mathbf{p}} +
\left(\mathbf{p}\cdot\mathbf{t}\right)\left(\mathbf{p}^\prime\cdot\mathbf{t}\right)I^1
\right).
\end{equation}
At this point, we have proved Theorem 1.
\end{proof}

\subsection{One-Touch Case}
\label{onetouch}

\begin{figure}[t]
	\centering
	\includegraphics[scale=0.75]{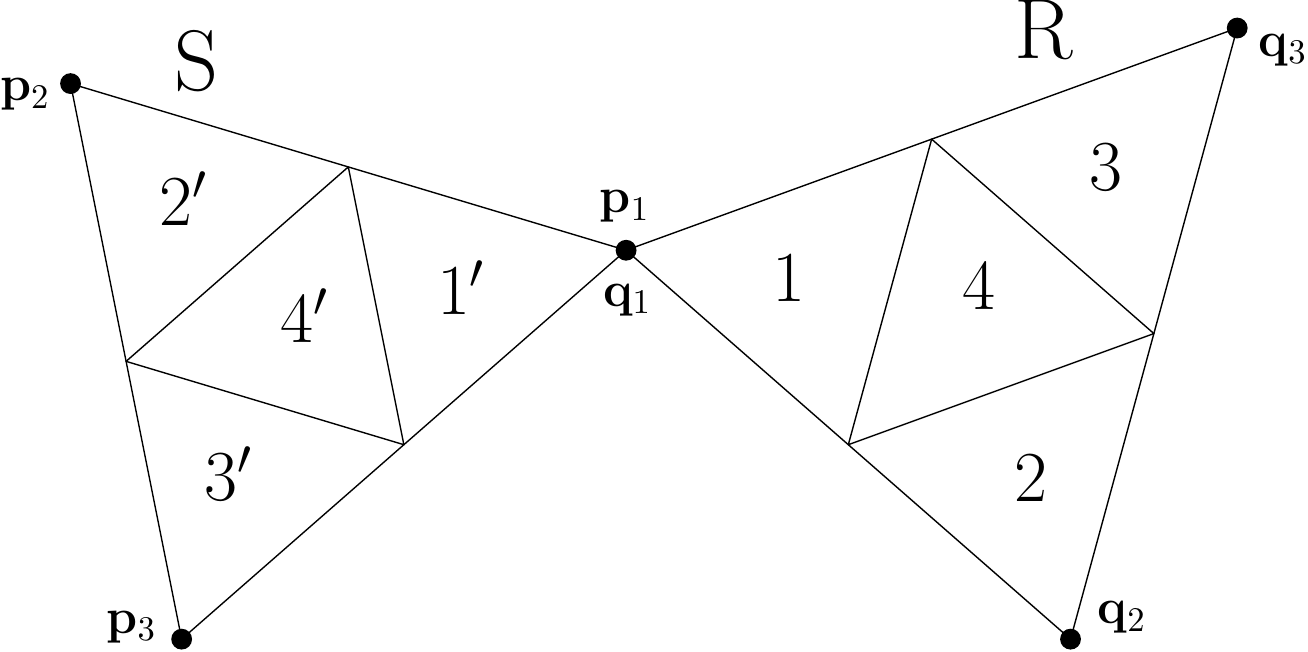}
	\caption{A diagram showing how to compute the double surface integral for two triangles that share a vertex.
The source triangle, $\text{S}$, is formed by the three vertices, $\mathbf{p}_1$, $\mathbf{p}_2$, $\mathbf{p}_3$, and the receiver triangle, $\text{R}$, is formed by the three vertices, $\mathbf{q}_1$, $\mathbf{q}_2$, and $\mathbf{q}_3$.
The two triangles share a vertex: $\mathbf{p}_1 = \mathbf{q}_1$.}
	\label{one_touch_problem}
\end{figure}

Consider the one-touch case in Fig.\ \ref{one_touch_problem}.
Without loss of generality, assume the vertex that the two triangles share is located at the origin (i.e., $\mathbf{p}_1 = \mathbf{0}$).
The fact that they share a vertex complicates the double surface integral over them.
We solve this problem by dividing each triangle into four smaller, congruent triangles: the source triangle, $\text{S}$, is divided into $1^\prime$, $2^\prime$, $3^\prime$, and $4^\prime$, and the receiver triangle, $\text{R}$, is divided into $1$, $2$, $3$, and $4$.
Then, we break the integral into seven smaller integrals over these triangles.

Let us look at $I^1$:
\begin{equation}
\label{one_touch_1_everything}
I^1 = I_{1^\prime1}^1 + I_{1^\prime2}^1 + I_{1^\prime3}^1 + I_{1^\prime4}^1 + I_{2^\prime\text{R}}^1 + I_{3^\prime\text{R}}^1 + I_{4^\prime\text{R}}^1.
\end{equation}
The subscripts denote the surfaces of integration.
For example, $I_{1^\prime1}^1$ is the integral over $1^\prime$ and $1$.
Similarly, $I_{2^\prime\text{R}}^1$ is the integral over $2^\prime$ and $\text{R}$.
The $I^1$ without a subscript is the original integral over $\text{S}$ and $\text{R}$.
In Eq.\ (\ref{one_touch_1_everything}), the six integrals, $I_{1^\prime2}^1$, $I_{1^\prime3}^1$, $I_{1^\prime4}^1$, $I_{2^\prime\text{R}}^1$, $I_{3^\prime\text{R}}^1$, and $I_{4^\prime\text{R}}^1$, correspond to pairs of triangles that do not touch.
Because we know how to compute them, let us combine them into a single integral:
\begin{equation}
\label{one_touch_1}
I^1 = I_{1^\prime1}^1 + I_{\text{remainder}}^1,
\end{equation}
where
\begin{equation}
I_{\text{remainder}}^1 = I_{1^\prime2}^1 + I_{1^\prime3}^1 + I_{1^\prime4}^1 + I_{2^\prime\text{R}}^1 + I_{3^\prime\text{R}}^1 + I_{4^\prime\text{R}}^1.
\end{equation}
The integral, $I_{1^\prime1}^1$, however, has the same problem as the original integral: $1^\prime$ and $1$ share a vertex.
Fortunately, $1^\prime$ and $1$ are scaled and translated versions of $\text{S}$ and $\text{R}$.
We derived earlier that
\begin{equation}
I_\ast^1 = s\left(\alpha\right)\alpha^4I^1.
\end{equation}
Since the pair of triangles, $1^\prime$ and $1$, is $1 / 2$ the size of the original pair, $\text{S}$ and $\text{R}$, $\alpha = 1 / 2$, so
\begin{equation}
I_{1^\prime1}^1 = s\left(\frac{1}{2}\right)\frac{1}{16}I^1.
\end{equation}
Plugging this into Eq.\ (\ref{one_touch_1}) and rearranging,
\begin{equation}
I^1 = \left(1 - s\left(\frac{1}{2}\right)\frac{1}{16}\right)^{-1}I_{\text{remainder}}^1.
\end{equation}
The integral, $I_{1^\prime1}^1$, is computed implicitly in this expression.

Now, let us look at $I^{\mathbf{p}^\prime}$:
\begin{equation}
I^{\mathbf{p}^\prime} = I_{1^\prime1}^{\mathbf{p}^\prime} + I_{1^\prime2}^{\mathbf{p}^\prime} + I_{1^\prime3}^{\mathbf{p}^\prime} + I_{1^\prime4}^{\mathbf{p}^\prime} + I_{2^\prime\text{R}}^{\mathbf{p}^\prime} + I_{3^\prime\text{R}}^{\mathbf{p}^\prime} + I_{4^\prime\text{R}}^{\mathbf{p}^\prime}.
\end{equation}
Again, the six integrals other than $I_{1^\prime1}^{\mathbf{p}^\prime}$ are all regular, so let us combine them into a single integral:
\begin{equation}
\label{one_touch_pprime}
I^{\mathbf{p}^\prime} = I_{1^\prime1}^{\mathbf{p}^\prime} + I_{\text{remainder}}^{\mathbf{p}^\prime}.
\end{equation}
This leaves $I_{1^\prime1}^{\mathbf{p}^\prime}$.
We derived earlier that
\begin{equation}
I_\ast^{\mathbf{p}^\prime} = s\left(\alpha\right)\alpha^5\left(I^{\mathbf{p}^\prime} + \left(\mathbf{p}^\prime\cdot\mathbf{t}\right)I^1\right).
\end{equation}
Like before, $\alpha = 1 / 2$, but we still need to determine $\mathbf{t}$.
When $\alpha = 1 / 2$, $\mathbf{t}$ is the point that does not change during the scaling and translation.
This is simply the vertex that the two triangles share, so $\mathbf{t} = \mathbf{p}_1$.
Because we assumed that $\mathbf{p}_1 = \mathbf{0}$, $\mathbf{t} = \mathbf{0}$ as well, so
\begin{equation}
I_{11}^{\mathbf{p}^\prime} = s\left(\frac{1}{2}\right)\frac{1}{32}I^{\mathbf{p}^\prime}.
\end{equation}
Plugging this into Eq.\ (\ref{one_touch_pprime}) and rearranging,
\begin{equation}
I^{\mathbf{p}^\prime} = \left(1 - s\left(\frac{1}{2}\right)\frac{1}{32}\right)^{-1}I_{\text{remainder}}^{\mathbf{p}^\prime}.
\end{equation}

These same procedures can be used to compute $I^{\mathbf{p}}$ and $I^{\mathbf{p}^\prime\mathbf{p}}$:
\begin{equation}
I^{\mathbf{p}} = \left(1 - s\left(\frac{1}{2}\right)\frac{1}{32}\right)^{-1}I_{\text{remainder}}^{\mathbf{p}},
\end{equation}
\begin{equation}
I^{\mathbf{p}^\prime\mathbf{p}} = \left(1 - s\left(\frac{1}{2}\right)\frac{1}{64}\right)^{-1}I_{\text{remainder}}^{\mathbf{p}^\prime\mathbf{p}}.
\end{equation}

\subsection{Two-Touch Case}
\label{twotouch}

\begin{figure}[t]
	\centering
	\includegraphics[scale=0.75]{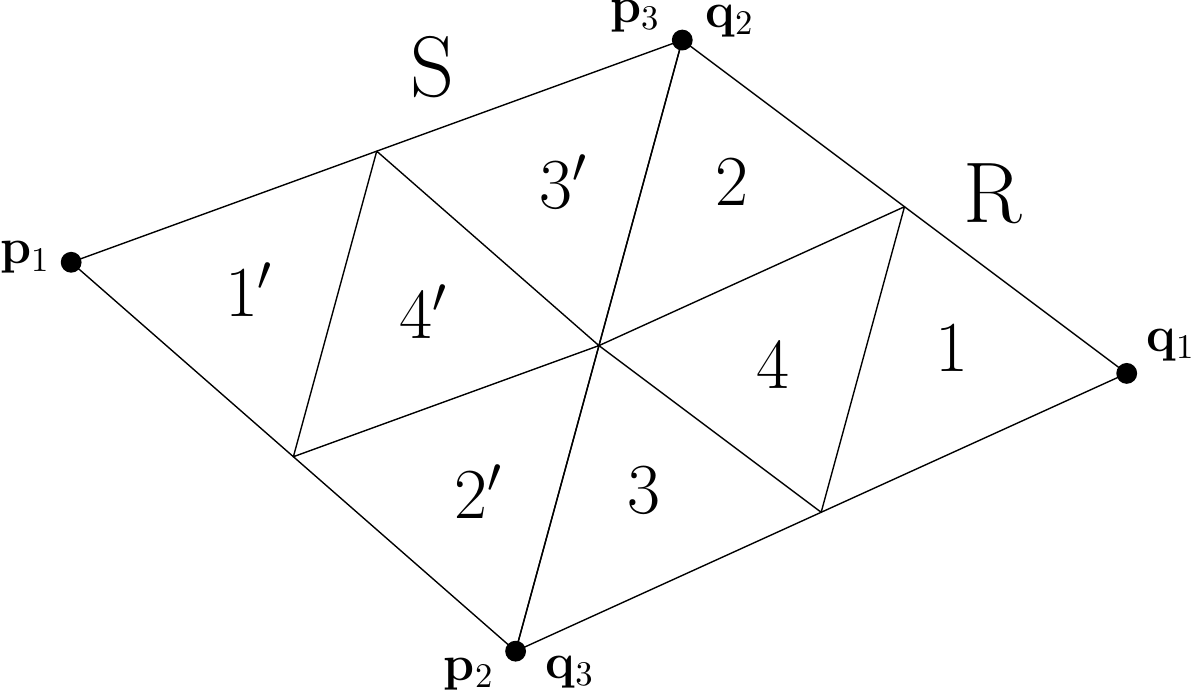}
	\caption{A diagram showing how to compute the double surface integral for two triangles that share an edge.
The source triangle, $\text{S}$, is formed by the three vertices, $\mathbf{p}_1$, $\mathbf{p}_2$, $\mathbf{p}_3$, and the receiver triangle, $\text{R}$, is formed by the three vertices, $\mathbf{q}_1$, $\mathbf{q}_2$, and $\mathbf{q}_3$.
The two triangles share an edge: $\mathbf{p}_2 = \mathbf{q}_3$ and $\mathbf{p}_3 = \mathbf{q}_2$.}
	\label{two_touch_problem}
\end{figure}

Consider the two-touch case in Fig.\ \ref{two_touch_problem}.
Without loss of generality, assume the midpoint of the edge that the two triangles share is located at the origin (i.e., $\mathbf{p}_2 + \mathbf{p}_3 = \mathbf{0}$).
The fact that they share an edge complicates the double surface integral over them.
Like in the one-touch case, we solve this problem by breaking the integral into several smaller integrals.

Let us look at $I^1$:
\begin{equation}
I^1 = I_{1^\prime\text{R}}^1 + I_{2^\prime2}^1 + I_{2^\prime3}^1 + I_{2^\prime4}^1 + I_{3^\prime2}^1 + I_{3^\prime3}^1 + I_{3^\prime4}^1 + I_{4^\prime2}^1 + I_{4^\prime3}^1 + I_{4^\prime4}^1 + I_{\text{S}1}^1 - I_{1^\prime1}^1.
\end{equation}
The integrals, $I_{1^\prime\text{R}}^1$, $I_{\text{S}1}^1$, and $I_{1^\prime1}^1$ are zero-touch integrals, and the integrals, $I_{2^\prime2}^1$, $I_{2^\prime4}^1$, $I_{3^\prime3}^1$, $I_{3^\prime4}^1$, $I_{4^\prime2}^1$, $I_{4^\prime3}^1$, and $I_{4^\prime4}^1$, are one-touch integrals.
Since we know how to compute them, let us combine them into a single integral:
\begin{equation}
\label{two_touch_1}
I^1 = I_{2^\prime3}^1 + I_{3^\prime2}^1 + I_{\text{remainder}}^1.
\end{equation}
That leaves $I_{2^\prime3}^1$ and $I_{3^\prime2}^1$, which have the same problem as the original integral: they correspond to pairs of triangles that share an edge.
Fortunately, the pair, $2^\prime$ and $3$, and the pair, $3^\prime$ and $2$, are each scaled and translated versions of the original pair, $\text{S}$ and $\text{R}$.
We derived earlier that
\begin{equation}
I_\ast^1 = s\left(\alpha\right)\alpha^4I^1.
\end{equation}
So,
\begin{equation}
I_{2^\prime3}^1 = s\left(\frac{1}{2}\right)\frac{1}{16}I^1,\quad
I_{3^\prime2}^1 = s\left(\frac{1}{2}\right)\frac{1}{16}I^1.
\end{equation}
Plugging these into Eq.\ (\ref{two_touch_1}) and rearranging,
\begin{equation}
I^1 = \left(1 - s\left(\frac{1}{2}\right)\frac{1}{8}\right)^{-1}I_{\text{remainder}}^1.
\end{equation}
The integrals, $I_{2^\prime3}^1$ and $I_{3^\prime2}^1$, are computed implicitly in this expression.

Now, let us look at $I^{\mathbf{p}^\prime}$:
\begin{equation}
I^{\mathbf{p}^\prime} = I_{1^\prime\text{R}}^{\mathbf{p}^\prime} + I_{2^\prime2}^{\mathbf{p}^\prime} + I_{2^\prime3}^{\mathbf{p}^\prime} + I_{2^\prime4}^{\mathbf{p}^\prime} + I_{3^\prime2}^{\mathbf{p}^\prime} + I_{3^\prime3}^{\mathbf{p}^\prime} + I_{3^\prime4}^{\mathbf{p}^\prime} + I_{4^\prime2}^{\mathbf{p}^\prime}+ I_{4^\prime3}^{\mathbf{p}^\prime} + I_{4^\prime4}^{\mathbf{p}^\prime} + I_{\text{S}1}^{\mathbf{p}^\prime} - I_{1^\prime1}^{\mathbf{p}^\prime}.
\end{equation}
Again, let us combine the ten integrals that we know how to compute into one:
\begin{equation}
\label{two_touch_pprime}
I^{\mathbf{p}^\prime} = I_{2^\prime3}^{\mathbf{p}^\prime} + I_{3^\prime2}^{\mathbf{p}^\prime} + I_{\text{remainder}}^{\mathbf{p}^\prime}.
\end{equation}
We derived earlier that
\begin{equation}
I_\ast^{\mathbf{p}^\prime} = s\left(\alpha\right)\alpha^5\left(I^{\mathbf{p}^\prime} + \left(\mathbf{p}^\prime\cdot\mathbf{t}\right)I^1\right).
\end{equation}
So,
\begin{equation}
I_{2^\prime3}^{\mathbf{p}^\prime} + I_{3^\prime2}^{\mathbf{p}^\prime} = s\left(\frac{1}{2}\right)\frac{1}{16}I^{\mathbf{p}^\prime} + s\left(\frac{1}{2}\right)\frac{1}{32}\left(\mathbf{p}^\prime\cdot\left(\mathbf{t}_{2^\prime3} + \mathbf{t}_{3^\prime2}\right)\right)I^1.
\end{equation}
We need to determine $\mathbf{t}_{2^\prime3}$ and $\mathbf{t}_{3^\prime2}$.
As discussed earlier, when $\alpha = 1 / 2$, the translation vector, $\mathbf{t}$, corresponds to the point that does not change during the transformation.
Thus, for the pair, $2^\prime$ and $3$, that is $\mathbf{p}_2$, and for the pair, $2$ and $3^\prime$, that is $\mathbf{p}_3$.
The edge that $\text{S}$ and $\text{R}$ share is centered around the origin, so $\mathbf{t}_{2^\prime3} + \mathbf{t}_{3^\prime2} = \mathbf{0}$, which means that
\begin{equation}
I_{2^\prime3}^{\mathbf{p}^\prime} + I_{3^\prime2}^{\mathbf{p}^\prime} = s\left(\frac{1}{2}\right)\frac{1}{16}I^{\mathbf{p}^\prime}.
\end{equation}
Plugging this into Eq.\ (\ref{two_touch_pprime}) and rearranging,
\begin{equation}
I^{\mathbf{p}^\prime} = \left(1 - s\left(\frac{1}{2}\right)\frac{1}{16}\right)^{-1}I_{\text{remainder}}^{\mathbf{p}^\prime}.
\end{equation}

The same analysis can be used for comuting $I^{\mathbf{p}}$:
\begin{equation}
I^{\mathbf{p}} = \left(1 - s\left(\frac{1}{2}\right)\frac{1}{16}\right)^{-1}I_{\text{remainder}}^{\mathbf{p}}.
\end{equation}

Finally, let us look at $I^{\mathbf{p}^\prime\mathbf{p}}$:
\begin{equation}
\label{two_touch_pprimep}
I^{\mathbf{p}^\prime\mathbf{p}} = I_{2^\prime3}^{\mathbf{p}^\prime\mathbf{p}} + I_{3^\prime2}^{\mathbf{p}^\prime\mathbf{p}} + I_{\text{remainder}}^{\mathbf{p}^\prime\mathbf{p}},
\end{equation}
where
\begin{equation}
I_{\text{remainder}}^{\mathbf{p}^\prime\mathbf{p}} = I_{1^\prime\text{R}}^{\mathbf{p}^\prime\mathbf{p}} + I_{2^\prime2}^{\mathbf{p}^\prime\mathbf{p}} + I_{2^\prime4}^{\mathbf{p}^\prime\mathbf{p}} + I_{3^\prime3}^{\mathbf{p}^\prime\mathbf{p}} + I_{3^\prime4}^{\mathbf{p}^\prime\mathbf{p}} + I_{4^\prime2}^{\mathbf{p}^\prime\mathbf{p}}+ I_{4^\prime3}^{\mathbf{p}^\prime\mathbf{p}} + I_{4^\prime4}^{\mathbf{p}^\prime\mathbf{p}} + I_{\text{S}1}^{\mathbf{p}^\prime\mathbf{p}} - I_{1^\prime1}^{\mathbf{p}^\prime\mathbf{p}}.
\end{equation}
We derived earlier that
\begin{equation}
I_\ast^{\mathbf{p}^\prime\mathbf{p}} = s\left(\alpha\right)\alpha^6\left(
I^{\mathbf{p}^\prime\mathbf{p}} +
\left(\mathbf{p}\cdot\mathbf{t}\right)I^{\mathbf{p}^\prime} +
\left(\mathbf{p}^\prime\cdot\mathbf{t}\right)I^{\mathbf{p}} +
\left(\mathbf{p}\cdot\mathbf{t}\right)\left(\mathbf{p}^\prime\cdot\mathbf{t}\right)I^1
\right).
\end{equation}
So,
\begin{equation}
\label{two_touch_pprimep_relation}
I_{2^\prime3}^{\mathbf{p}^\prime\mathbf{p}} + I_{3^\prime2}^{\mathbf{p}^\prime\mathbf{p}} =
s\left(\frac{1}{2}\right)\frac{1}{32}I^{\mathbf{p}^\prime\mathbf{p}} +
s\left(\frac{1}{2}\right)\frac{1}{64}\left(\left(\mathbf{p}\cdot\mathbf{t}_{2^\prime3}\right)\left(\mathbf{p}^\prime\cdot\mathbf{t}_{2^\prime3}\right) + \left(\mathbf{p}\cdot\mathbf{t}_{3^\prime2}\right)\left(\mathbf{p}^\prime\cdot\mathbf{t}_{3^\prime2}\right)\right)I^1.
\end{equation}
The terms linear in $\mathbf{p}^\prime$ and $\mathbf{p}$ disappear because $\mathbf{t}_{2^\prime3} + \mathbf{t}_{3^\prime2} = \mathbf{0}$, but unfortunately, the other terms remain.
Plugging Eq.\ (\ref{two_touch_pprimep_relation}) into Eq.\ (\ref{two_touch_pprimep}) and rearranging,
\begin{equation}
I^{\mathbf{p}^\prime\mathbf{p}} = \left(1 - s\left(\frac{1}{2}\right)\frac{1}{32}\right)^{-1}\left(I_{\text{remainder}}^{\mathbf{p}^\prime\mathbf{p}} + a^{\mathbf{p}^\prime\mathbf{p}}\right),
\end{equation}
where
\begin{equation}
a^{\mathbf{p}^\prime\mathbf{p}} = s\left(\frac{1}{2}\right)\frac{1}{64}\left(\left(\mathbf{p}\cdot\mathbf{t}_{2^\prime3}\right)\left(\mathbf{p}^\prime\cdot\mathbf{t}_{2^\prime3}\right) + \left(\mathbf{p}\cdot\mathbf{t}_{3^\prime2}\right)\left(\mathbf{p}^\prime\cdot\mathbf{t}_{3^\prime2}\right)\right)I^1.
\end{equation}

\subsection{Three-Touch Case}
\label{threetouch}

\begin{figure}[t]
	\centering
	\includegraphics[scale=0.75]{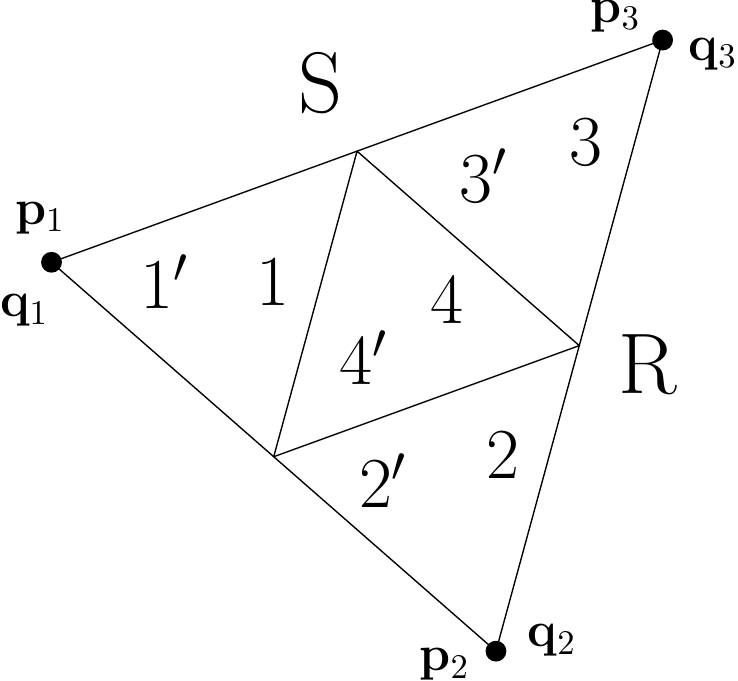}
	\caption{A diagram showing how to compute the double surface integral for two triangles that are the same.
The source triangle, $\text{S}$, is formed by the three vertices, $\mathbf{p}_1$, $\mathbf{p}_2$, $\mathbf{p}_3$, and the receiver triangle, $\text{R}$, is formed by the three vertices, $\mathbf{q}_1$, $\mathbf{q}_2$, and $\mathbf{q}_3$.
The two triangles are the same: $\mathbf{p}_1 = \mathbf{q}_1$, $\mathbf{p}_2 = \mathbf{q}_2$, and $\mathbf{p}_3 = \mathbf{q}_3$.}
	\label{three_touch_problem}
\end{figure}

Consider the three-touch case in Fig.\ \ref{three_touch_problem}.
In this case, the two triangles are the same.
Without loss of generality, assume that the centroids of the two triangles are located at the origin (i.e., $\mathbf{p}_1 + \mathbf{p}_2 + \mathbf{p}_3 = \mathbf{0}$).
The fact that the two triangles are the same complicates the double surface integral over them.
Like in the one- and two-touch cases, we solve this problem by breaking the integral into several smaller integrals.

Let us look at $I^1$:
\begin{equation}
I^1 =
I_{1^\prime1}^1 + I_{1^\prime2}^1 + I_{1^\prime3}^1 + I_{1^\prime4}^1 +
I_{2^\prime1}^1 + I_{2^\prime2}^1 + I_{2^\prime3}^1 + I_{2^\prime4}^1 +
I_{3^\prime1}^1 + I_{3^\prime2}^1 + I_{3^\prime3}^1 + I_{3^\prime4}^1 +
I_{4^\prime1}^1 + I_{4^\prime2}^1 + I_{4^\prime3}^1 + I_{4^\prime4}^1.
\end{equation}
The integrals, $I_{1^\prime2}^1$, $I_{1^\prime3}^1$, $I_{2^\prime1}^1$, $I_{2^\prime3}^1$, $I_{3^\prime1}^1$, and $I_{3^\prime2}^1$, are one-touch integrals, and the integrals, $I_{1^\prime4}^1$, $I_{2^\prime4}^1$, $I_{3^\prime4}^1$, $I_{4^\prime1}^1$, $I_{4^\prime2}^1$, and $I_{4^\prime3}^1$, are two-touch integrals.
Like before, because we know how to compute them, let us combine these 12 integrals into a single integral:
\begin{equation}
\label{three_touch_1}
I^1 =
I_{1^\prime1}^1 + I_{2^\prime2}^1 + I_{3^\prime3}^1 + I_{4^\prime4}^1 + I_{\text{remainder}}^1.
\end{equation}
That leaves the four integrals, $I_{1^\prime1}^1$, $I_{2^\prime2}^1$, $I_{3^\prime3}^1$, and $I_{4^\prime4}^1$, which have the same problem as the original: they correspond to pairs of triangles that are the same.
We derived earlier that
\begin{equation}
I_\ast^1 = s\left(\alpha\right)\alpha^4I^1.
\end{equation}
So,
\begin{equation}
I_{1^\prime1}^1 + I_{2^\prime2}^1 + I_{3^\prime3}^1 = s\left(\frac{1}{2}\right)\frac{3}{16}I^1.
\end{equation}
The pair of triangles, $4^\prime$ and $4$, are not only scaled by a factor of $1 / 2$, but also rotated by 180 degrees.
Because they are centered around the origin, we can achieve this rotation by setting $\alpha = -1 / 2$:
\begin{equation}
I_{4^\prime4}^1 = s\left(-\frac{1}{2}\right)\frac{1}{16}I^1.
\end{equation}
Plugging these into Eq.\ (\ref{three_touch_1}) and rearranging,
\begin{equation}
I^1 = \left(1 - s\left(\frac{1}{2}\right)\frac{3}{16} - s\left(-\frac{1}{2}\right)\frac{1}{16}\right)^{-1}I_{\text{remainder}}^1.
\end{equation}

The same analysis from here and before can be used to compute expressions for $I^{\mathbf{p}^\prime}$, $I^{\mathbf{p}}$, and $I^{\mathbf{p}^\prime\mathbf{p}}$:
\begin{equation}
I^{\mathbf{p}^\prime} = \left(1 - s\left(\frac{1}{2}\right)\frac{3}{32} + s\left(-\frac{1}{2}\right)\frac{1}{32}\right)^{-1}I_{\text{remainder}}^{\mathbf{p}^\prime},
\end{equation}
\begin{equation}
I^{\mathbf{p}} = \left(1 - s\left(\frac{1}{2}\right)\frac{3}{32} + s\left(-\frac{1}{2}\right)\frac{1}{32}\right)^{-1}I_{\text{remainder}}^{\mathbf{p}},
\end{equation}
\begin{equation}
I^{\mathbf{p}^\prime\mathbf{p}} = \left(1 - s\left(\frac{1}{2}\right)\frac{3}{64} - s\left(-\frac{1}{2}\right)\frac{1}{64}\right)^{-1}\left(I_{\text{remainder}}^{\mathbf{p}^\prime\mathbf{p}} + a^{\mathbf{p}^\prime\mathbf{p}}\right),
\end{equation}
where
\begin{equation}
a^{\mathbf{p}^\prime\mathbf{p}} = s\left(\frac{1}{2}\right)\frac{1}{64}\left(
\left(\mathbf{p}\cdot\mathbf{t}_{1^\prime1}\right)\left(\mathbf{p}^\prime\cdot\mathbf{t}_{1^\prime1}\right) +
\left(\mathbf{p}\cdot\mathbf{t}_{2^\prime2}\right)\left(\mathbf{p}^\prime\cdot\mathbf{t}_{2^\prime2}\right) +
\left(\mathbf{p}\cdot\mathbf{t}_{3^\prime3}\right)\left(\mathbf{p}^\prime\cdot\mathbf{t}_{3^\prime3}\right)
\right)I^1,
\end{equation}
and $\mathbf{t}_{1^\prime1} = \mathbf{p}_1$, $\mathbf{t}_{2^\prime2} = \mathbf{p}_2$, and $\mathbf{t}_{3^\prime3} = \mathbf{p}_3$.

\section{Numerical Examples}

We have implemented the two methods that were described in Secs.\ \ref{zerocase} and \ref{onetwothreecase} in MATLAB as part of a Galerkin BEM library for the Laplace equation.
Currently, the library only supports external problems with Dirichlet boundary conditions, but even such simple problems provide an opportunity for demonstrating these two methods.
The library supports constant and linear triangular elements, and for comparison purposes, we have also implemented the collocation method.
Thus, the library contains four solvers: (1) constant collocation; (2) linear collocation; (3) constant Galerkin; and (4) linear Galerkin.
In this section, we demonstrate that the two methods work as advertised by using the library to solve two test problems.
To begin, we mention a couple implementation details of the library.
Following that, we describe and give the analytical solutions to the test problems, and then show some results from solving the problems using the library.

\subsection{Implementation Details}

There are two implementation details that should be mentioned.

First, in MATLAB, for the same accuracy, the analytical method for the zero-touch case described in Sec.\ \ref{zerocase} is faster for pairs of triangles that are far away from each other, and the semi-analytical method is faster for pairs of triangles that are nearby each other.
Thus, when a zero-touch integral needs to be computed directly, and not indirectly as part of a one-, two-, or three-touch integral, the analytical method is used.
However, the semi-analytical method is used instead when a zero-touch integral needs to be computed during the computation of a one-, two-, or three-touch integral.
For example, when computing a one-touch integral, six zero-touch integrals need to be computed, and these are done so semi-analytically.

Second, for external problems with Dirichlet boundary conditions, the only kernel that needs to be integrated is the Laplace equation's Green's function.
That is, when $F\left(\mathbf{r}\right) = G\left(\mathbf{r}\right)$, where
\begin{equation}
G\left(\mathbf{r}\right) = \frac{1}{4\pi\left|\mathbf{r}\right|}.
\end{equation}
The associated scaling function, $s\left(\alpha\right)$, for this kernel is given by
\begin{equation}
s\left(\alpha\right) = \frac{1}{\left|\alpha\right|}.
\end{equation}
Below, we quickly specialize and summarize the expressions derived in Sec.\ \ref{onetwothreecase} for this kernel.

For the one-touch case:
\begin{equation}
I^1 = \frac{8}{7}I_{\text{remainder}}^1
,\quad
I^{\mathbf{p}^\prime} = \frac{16}{15}I_{\text{remainder}}^{\mathbf{p}^\prime}
,\quad
I^{\mathbf{p}} = \frac{16}{15}I_{\text{remainder}}^{\mathbf{p}}
,\quad
I^{\mathbf{p}^\prime\mathbf{p}} = \frac{32}{31}I_{\text{remainder}}^{\mathbf{p}^\prime\mathbf{p}}.
\end{equation}

For the two-touch case:
\begin{equation}
I^1 = \frac{4}{3}I_{\text{remainder}}^1
,\quad
I^{\mathbf{p}^\prime} = \frac{8}{7}I_{\text{remainder}}^{\mathbf{p}^\prime}
,\quad
I^{\mathbf{p}} = \frac{8}{7}I_{\text{remainder}}^{\mathbf{p}}
,\quad
I^{\mathbf{p}^\prime\mathbf{p}} = \frac{16}{15}\left(I_{\text{remainder}}^{\mathbf{p}^\prime\mathbf{p}} + a^{\mathbf{p}^\prime\mathbf{p}}\right),
\end{equation}
where
\begin{equation}
a^{\mathbf{p}^\prime\mathbf{p}} = \frac{1}{32}\left(\left(\mathbf{p}\cdot\mathbf{t}_{2^\prime3}\right)\left(\mathbf{p}^\prime\cdot\mathbf{t}_{2^\prime3}\right) + \left(\mathbf{p}\cdot\mathbf{t}_{3^\prime2}\right)\left(\mathbf{p}^\prime\cdot\mathbf{t}_{3^\prime2}\right)\right)I^1.
\end{equation}

For the three-touch case:
\begin{equation}
I^1 = 2I_{\text{remainder}}^1
,\quad
I^{\mathbf{p}^\prime} = \frac{8}{7}I_{\text{remainder}}^{\mathbf{p}^\prime}
,\quad
I^{\mathbf{p}} = \frac{8}{7}I_{\text{remainder}}^{\mathbf{p}}
,\quad
I^{\mathbf{p}^\prime\mathbf{p}} = \frac{8}{7}\left(I_{\text{remainder}}^{\mathbf{p}^\prime\mathbf{p}} + a^{\mathbf{p}^\prime\mathbf{p}}\right),
\end{equation}
where
\begin{equation}
a^{\mathbf{p}^\prime\mathbf{p}} = \frac{1}{32}\left(
\left(\mathbf{p}\cdot\mathbf{t}_{1^\prime1}\right)\left(\mathbf{p}^\prime\cdot\mathbf{t}_{1^\prime1}\right) +
\left(\mathbf{p}\cdot\mathbf{t}_{2^\prime2}\right)\left(\mathbf{p}^\prime\cdot\mathbf{t}_{2^\prime2}\right) +
\left(\mathbf{p}\cdot\mathbf{t}_{3^\prime3}\right)\left(\mathbf{p}^\prime\cdot\mathbf{t}_{3^\prime3}\right)
\right)I^1.
\end{equation}

\subsection{Test Problem \#1: Cube Behaving Like a Point Source}

\begin{figure}[t]
	\centering
	\includegraphics[scale=0.4]{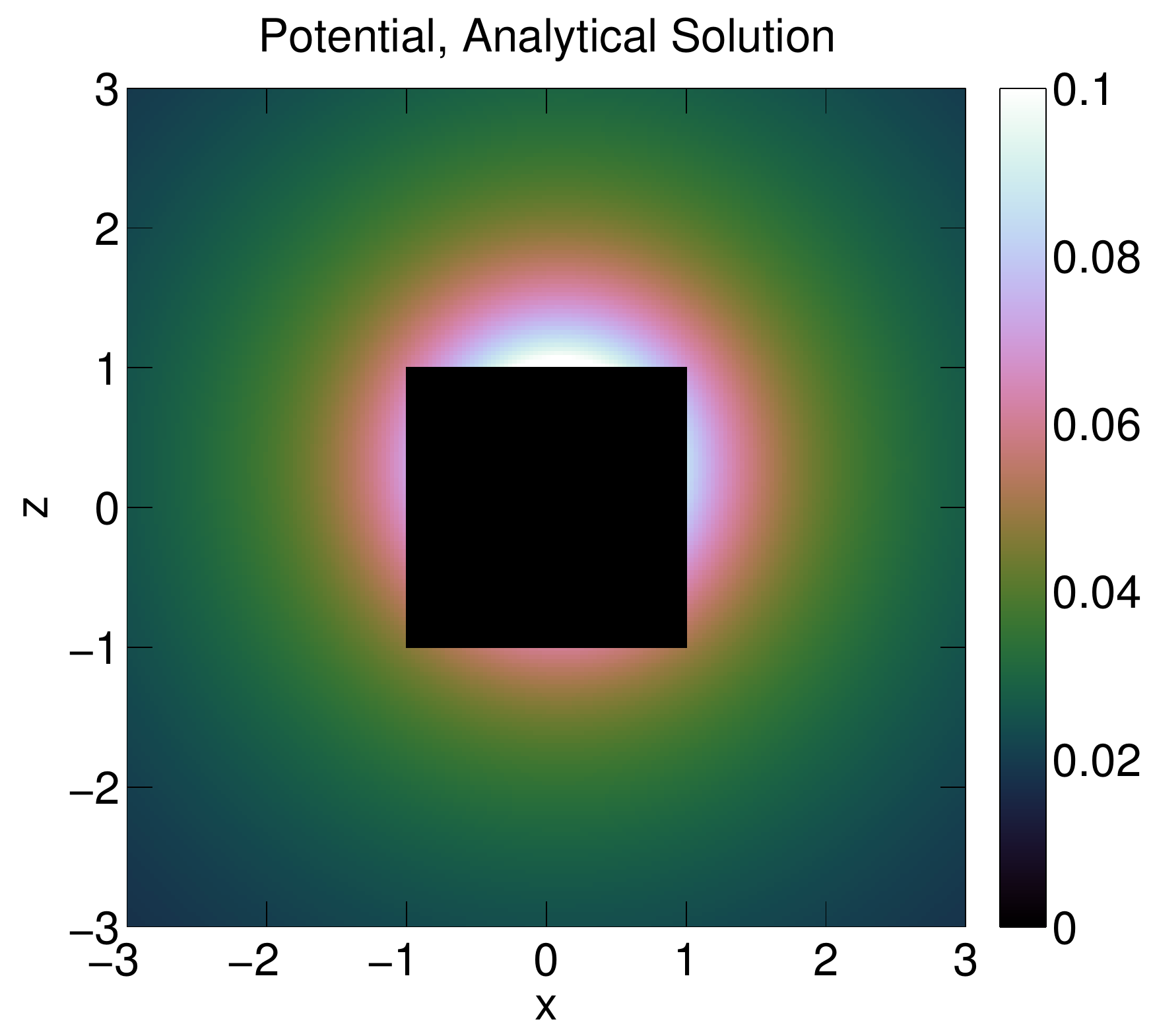}
	\includegraphics[scale=0.4]{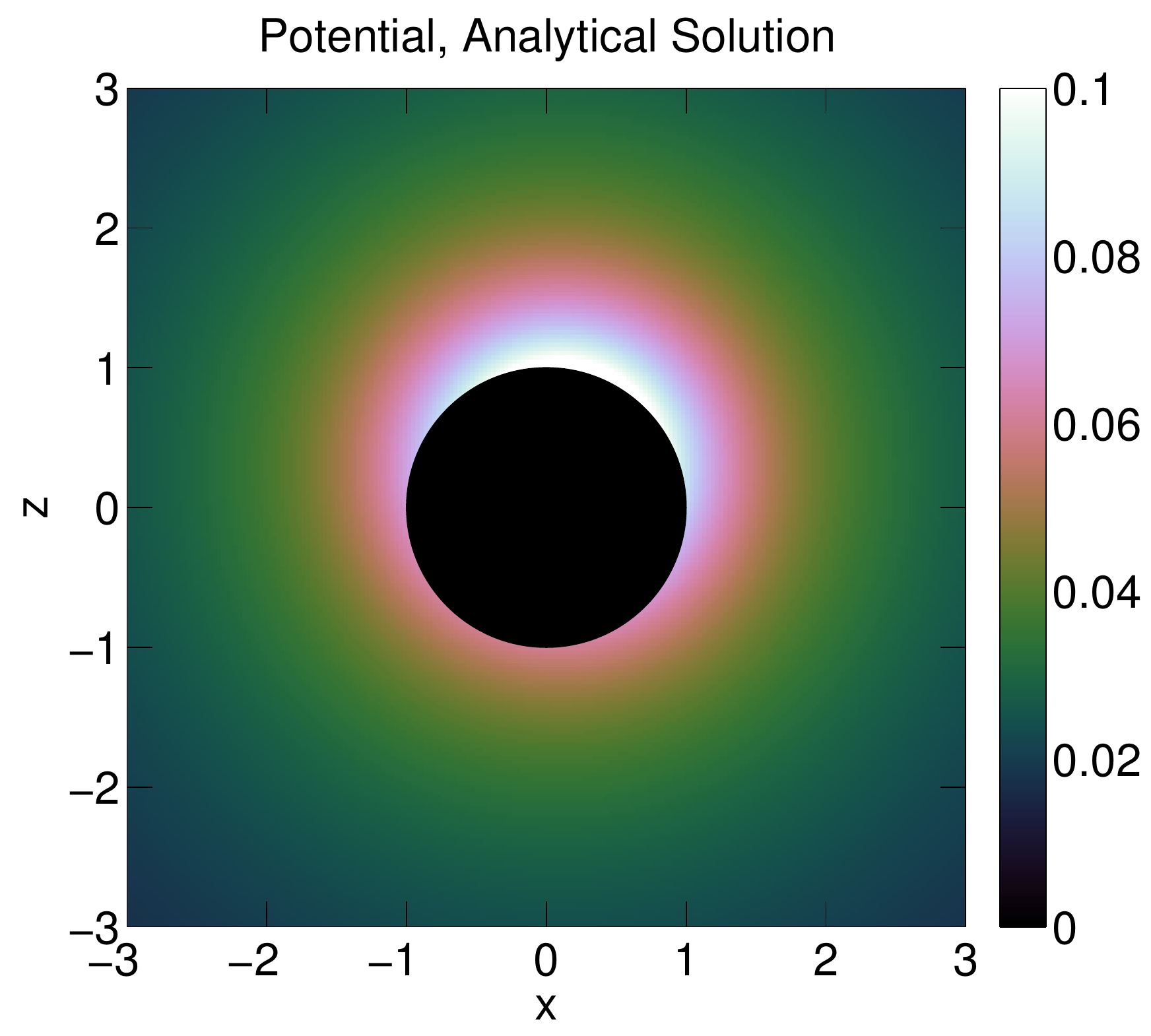}
	\caption{The potential due to a point source placed inside a cube (left) or a sphere (right).}
	\label{test1_ana}
\end{figure}

Consider a cube, $\text{C}$, centered around the origin, and place a point source at $\mathbf{x}_0$, where $\mathbf{x}_0$ is inside the cube.
The potential due to the point source is given by
\begin{equation}
\label{ps_potential}
\phi^\text{ps}\left(\mathbf{x}\right) = \frac{1}{4\pi\left|\mathbf{x} - \mathbf{x}_0\right|}
\end{equation}
and is shown in Fig.\ \ref{test1_ana}.
Suppose we want to compute an equivalent source distribution on the surface of the cube that gives rise to the same potential outside the cube as that of the pont source.
In other words, we want to compute a surface source distribution on the cube such that
\begin{equation}
\phi\left(\mathbf{x}\right) = \phi^\text{ps}\left(\mathbf{x}\right) = \frac{1}{4\pi\left|\mathbf{x} - \mathbf{x}_0\right|},\quad\mathbf{x} \in \partial\text{C},
\end{equation}
where $\partial\text{C}$ is the surface of the cube.
Obviously, the solution is $\phi\left(\mathbf{x}\right) = \phi^\text{ps}\left(\mathbf{x}\right)$.

\subsection{Test Problem \#2: Sphere Behaving Like a Point Source}

The sphere test problem is the same as the cube test problem, except that the boundary is a sphere instead of a cube.
In other words, we want to compute a source distribution on the surface of the sphere that gives rise to a potential outside of the sphere that is the same as that due to a point source placed inside the sphere (see Fig.\ \ref{test1_ana}).
The point source's position is the same in both test problems.
The only difference is the geometry of the boundary.

\subsection{Error Analysis}

\begin{figure}[t]
	\centering
	\includegraphics[scale=0.4]{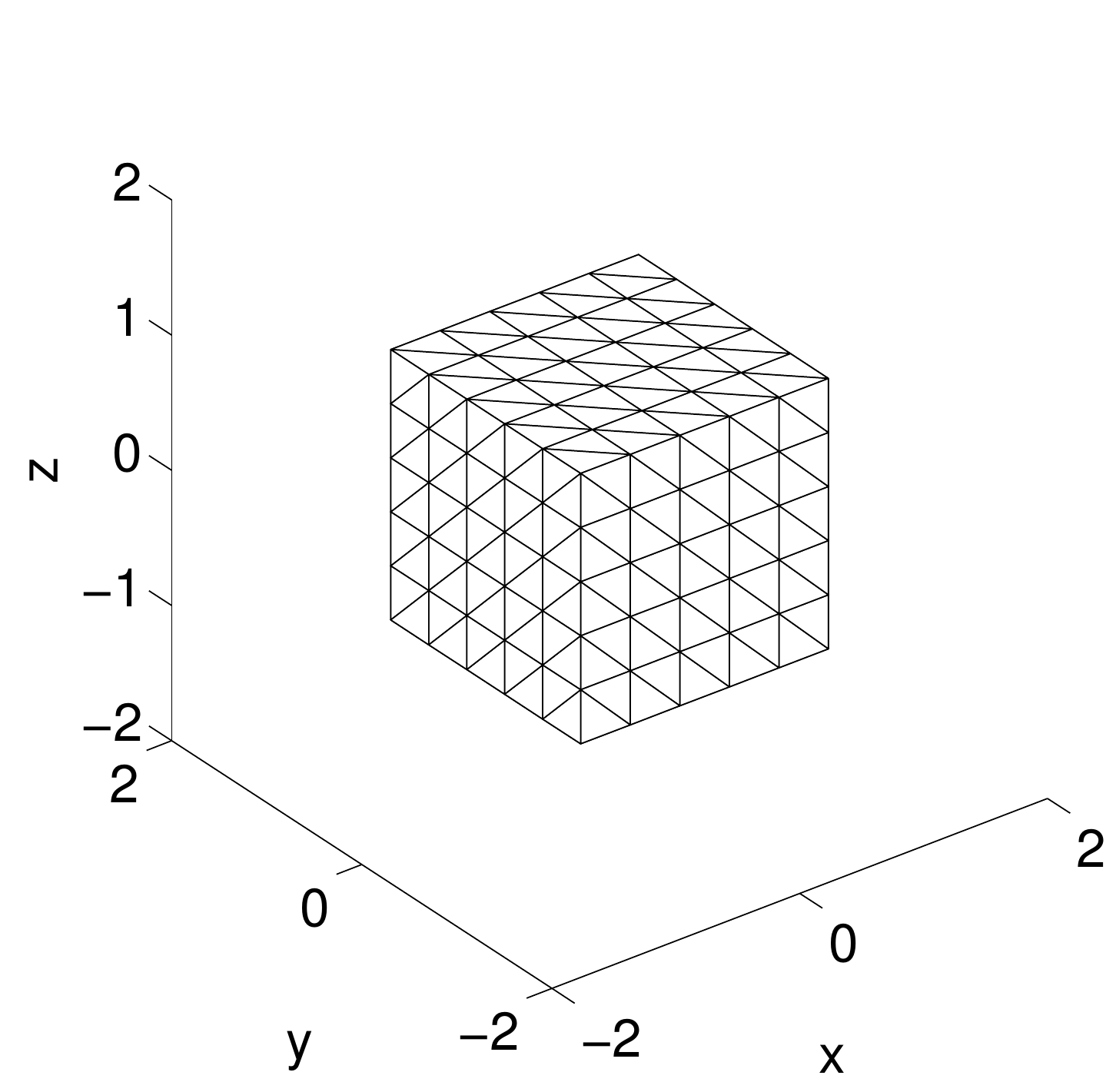}
	\includegraphics[scale=0.4]{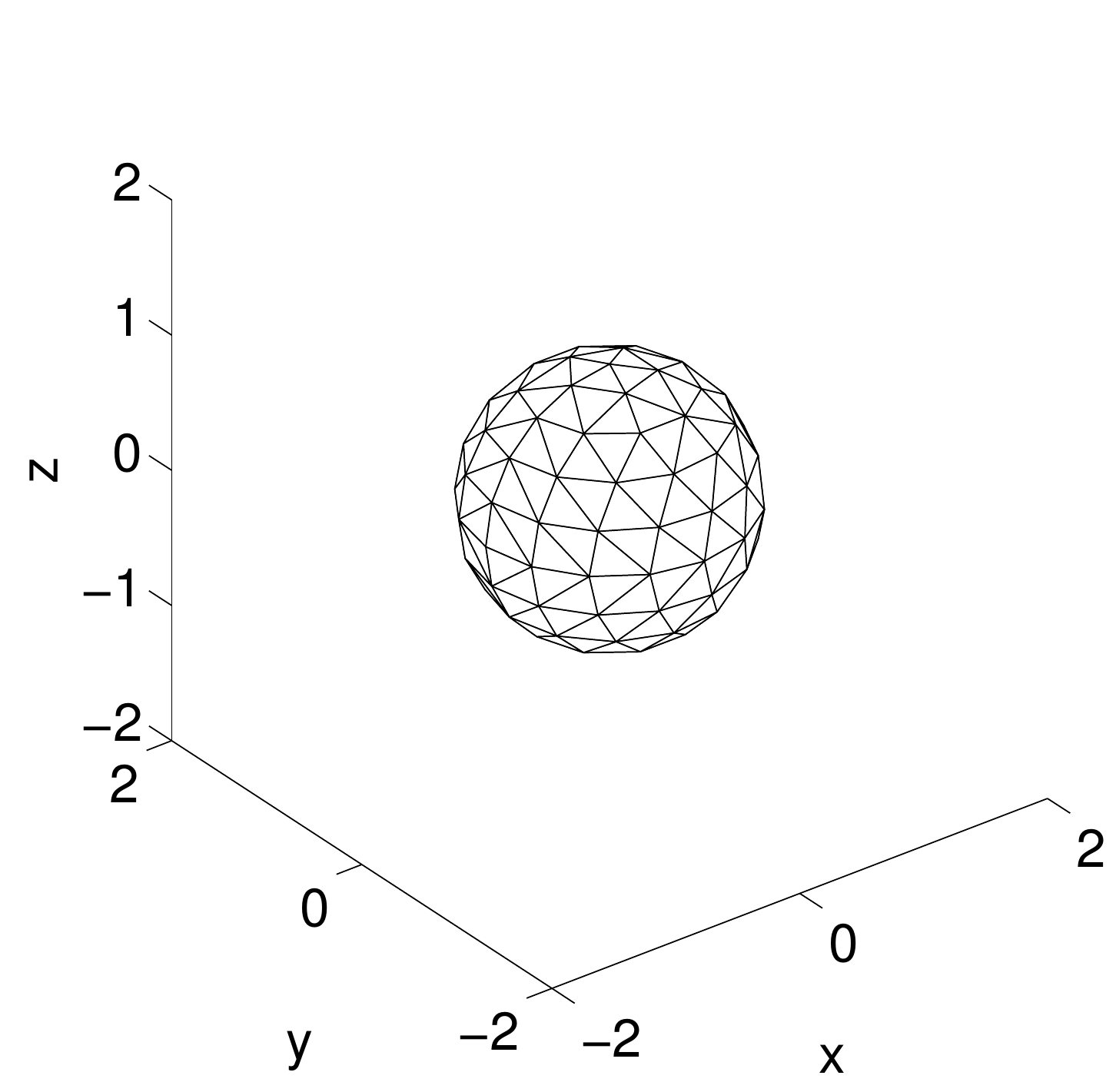}
	\caption{Two meshes used to compute the numerical solutions to the two test problems.  On the left: a mesh for the cube (300 triangles).  On the right: a mesh for the sphere (204 triangles).}
	\label{test_meshes}
\end{figure}

\begin{figure}[t]
	\centering
	\includegraphics[scale=0.4]{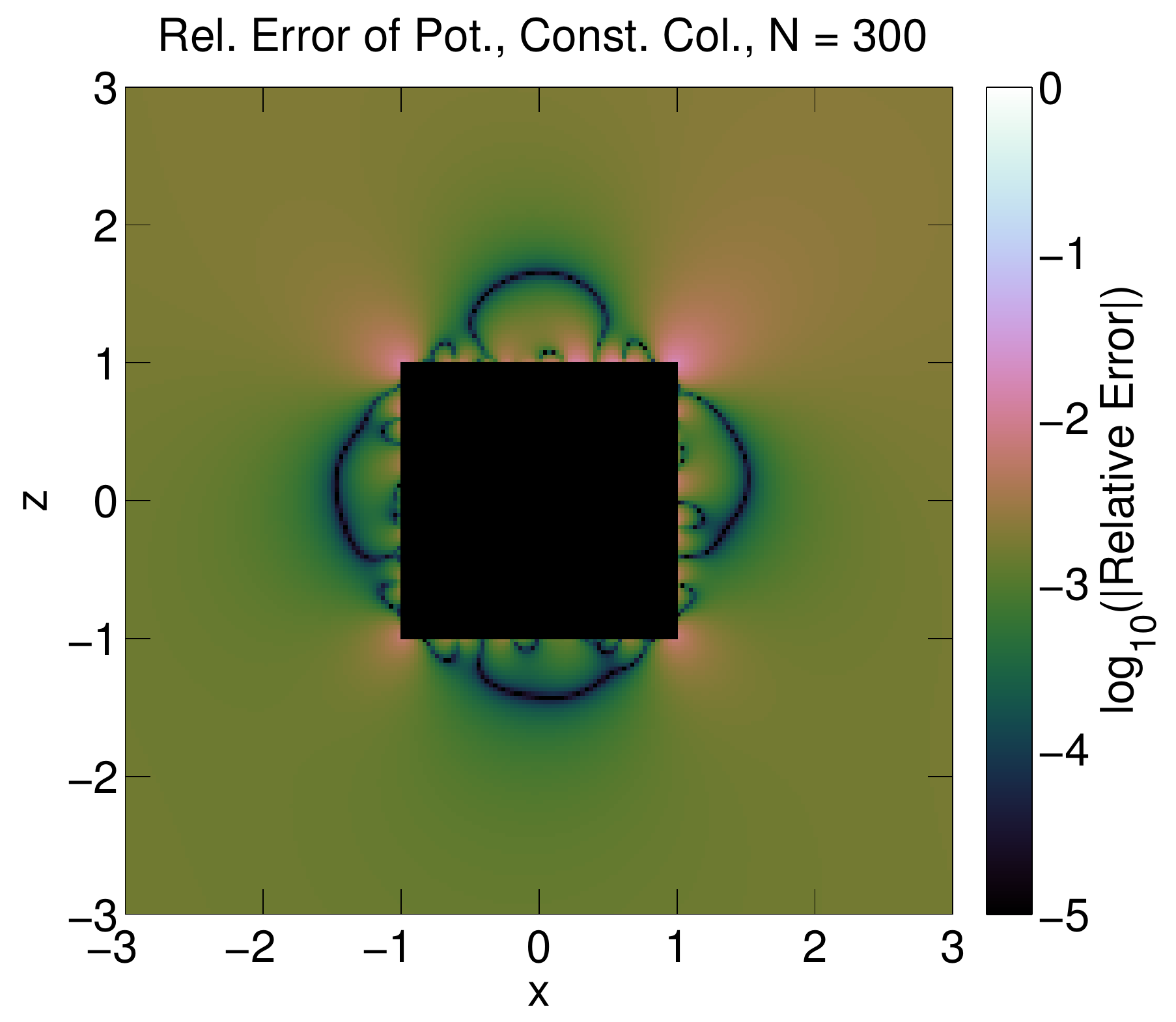}
	\includegraphics[scale=0.4]{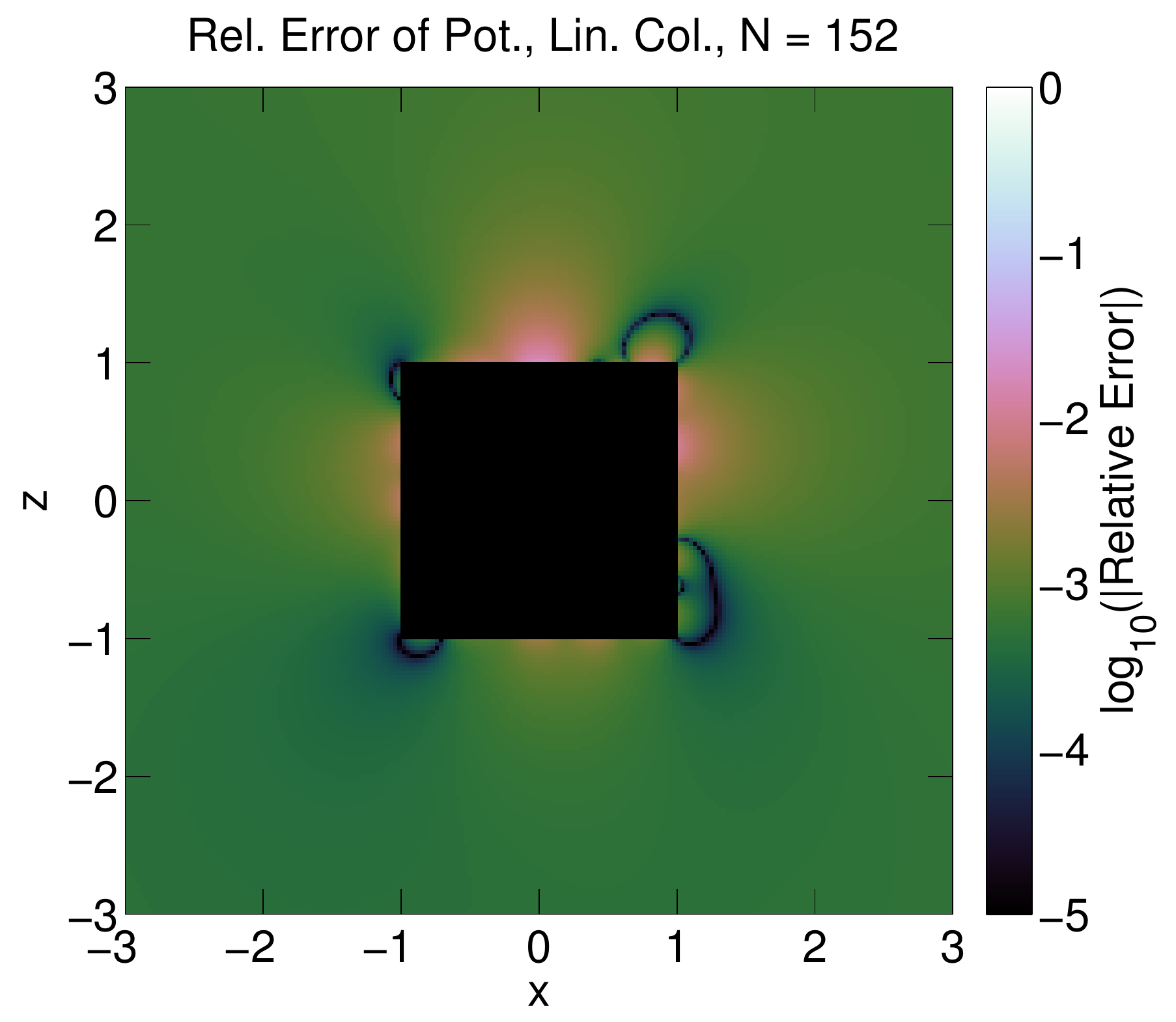}\\
	\includegraphics[scale=0.4]{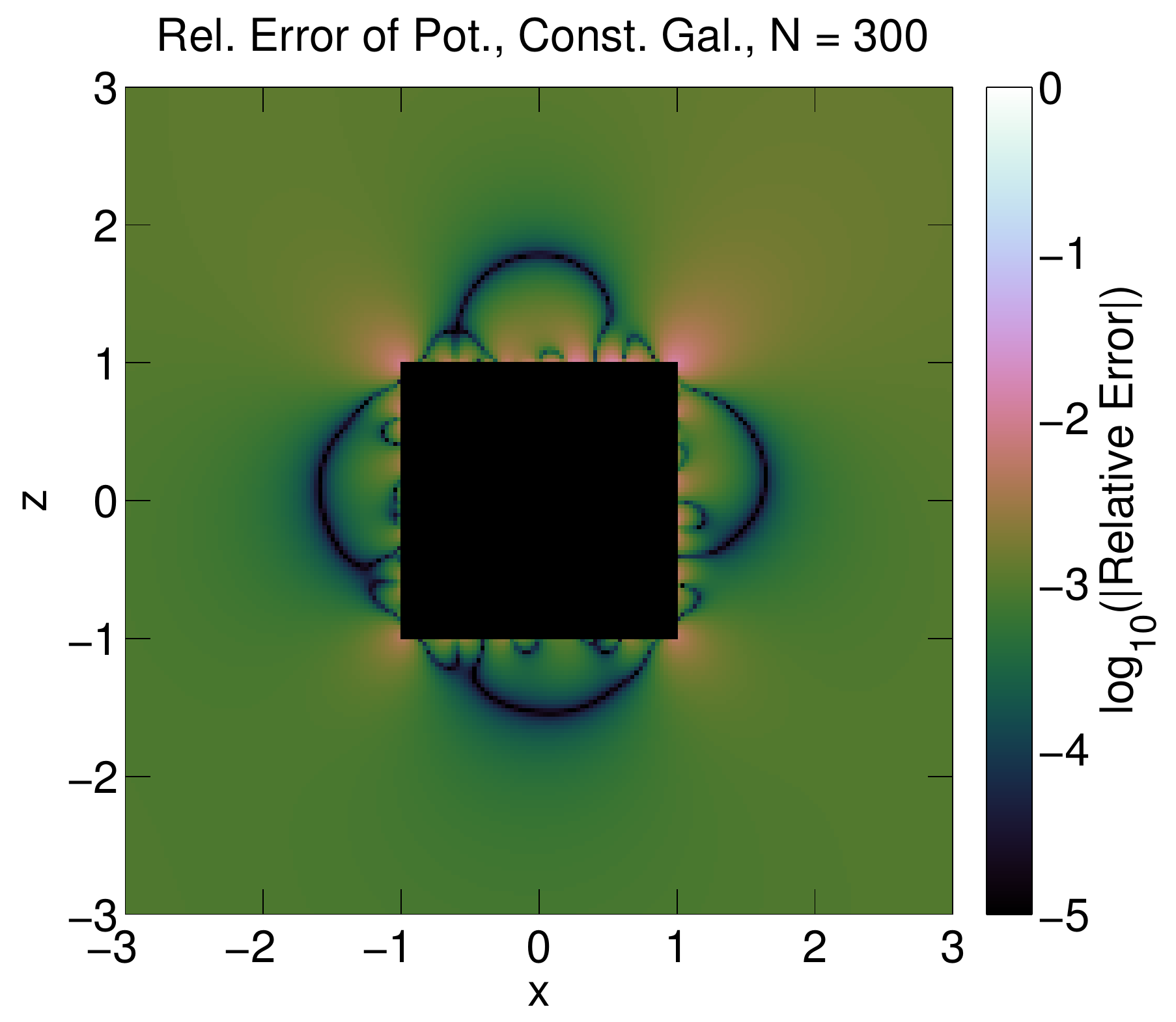}
	\includegraphics[scale=0.4]{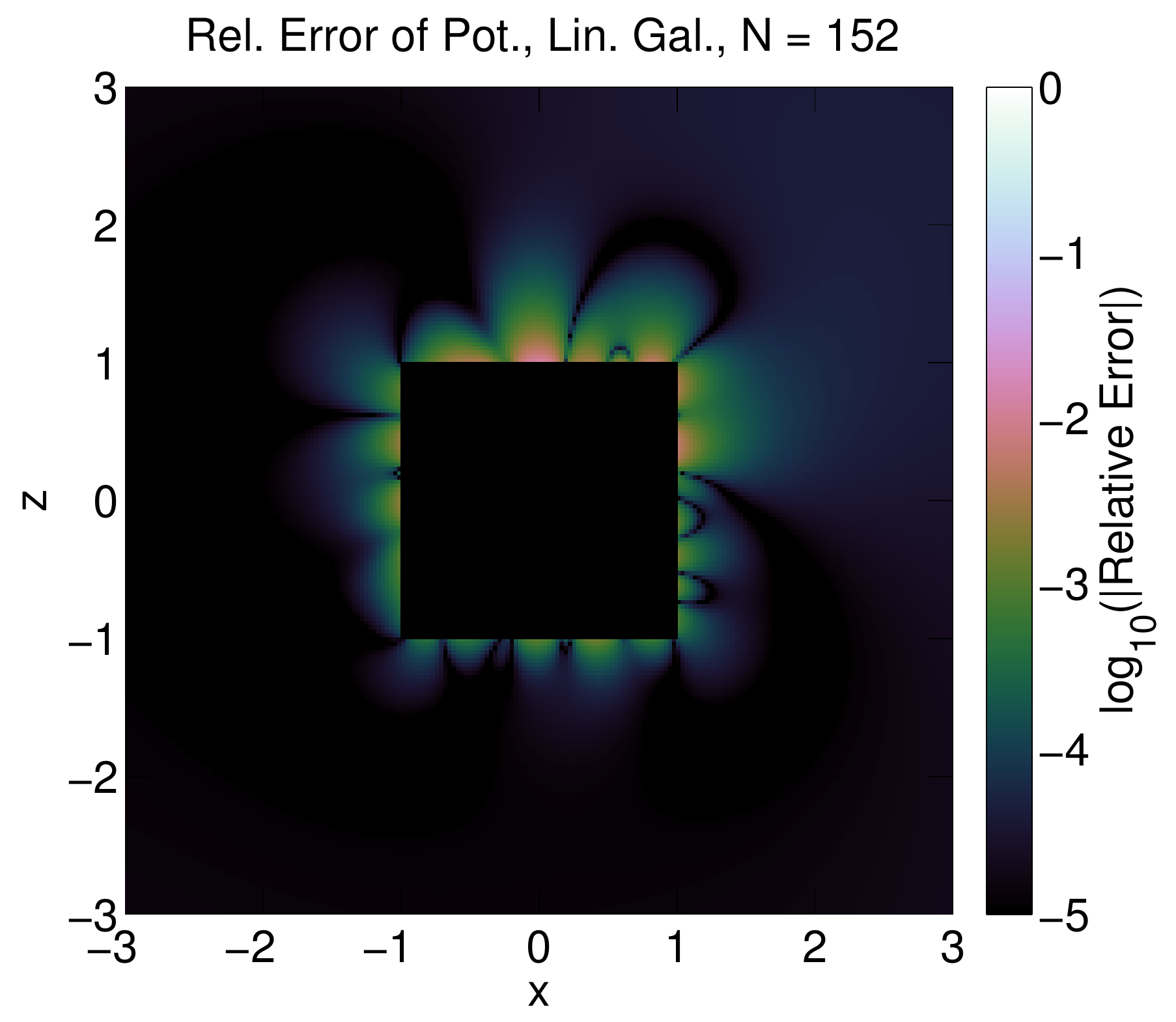}
	\caption{The relative error of the numerical solutions returned by the four solvers for the cube test problem using the mesh seen in Fig.\ \ref{test_meshes}.  The value for $N$ shown in the titles is the number of unknowns.  For constant elements, that is the number of triangles.  For linear elements, that is the number of vertices.}
	\label{test1_rel_error}
\end{figure}

\begin{figure}[t]
	\centering
	\includegraphics[scale=0.4]{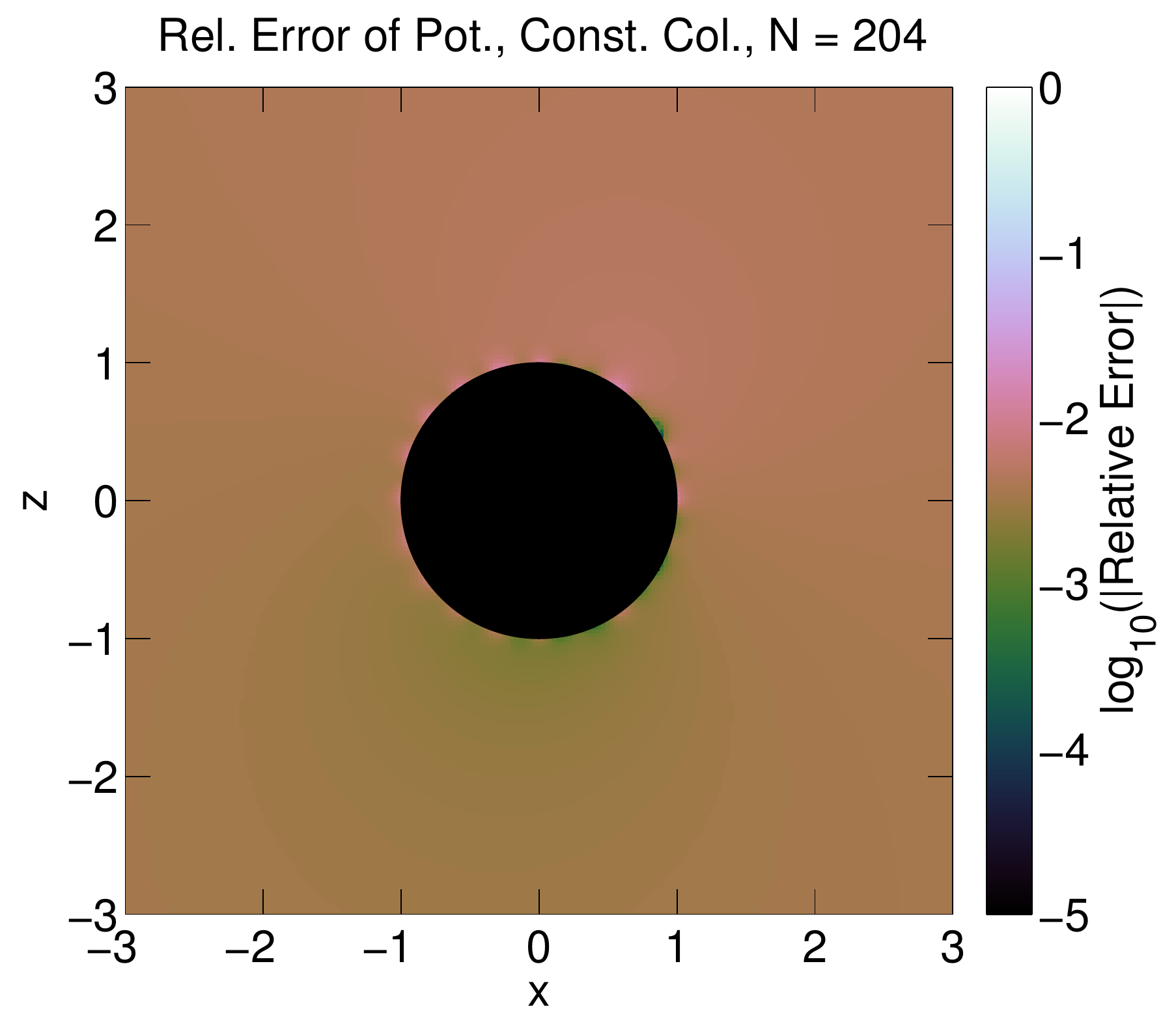}
	\includegraphics[scale=0.4]{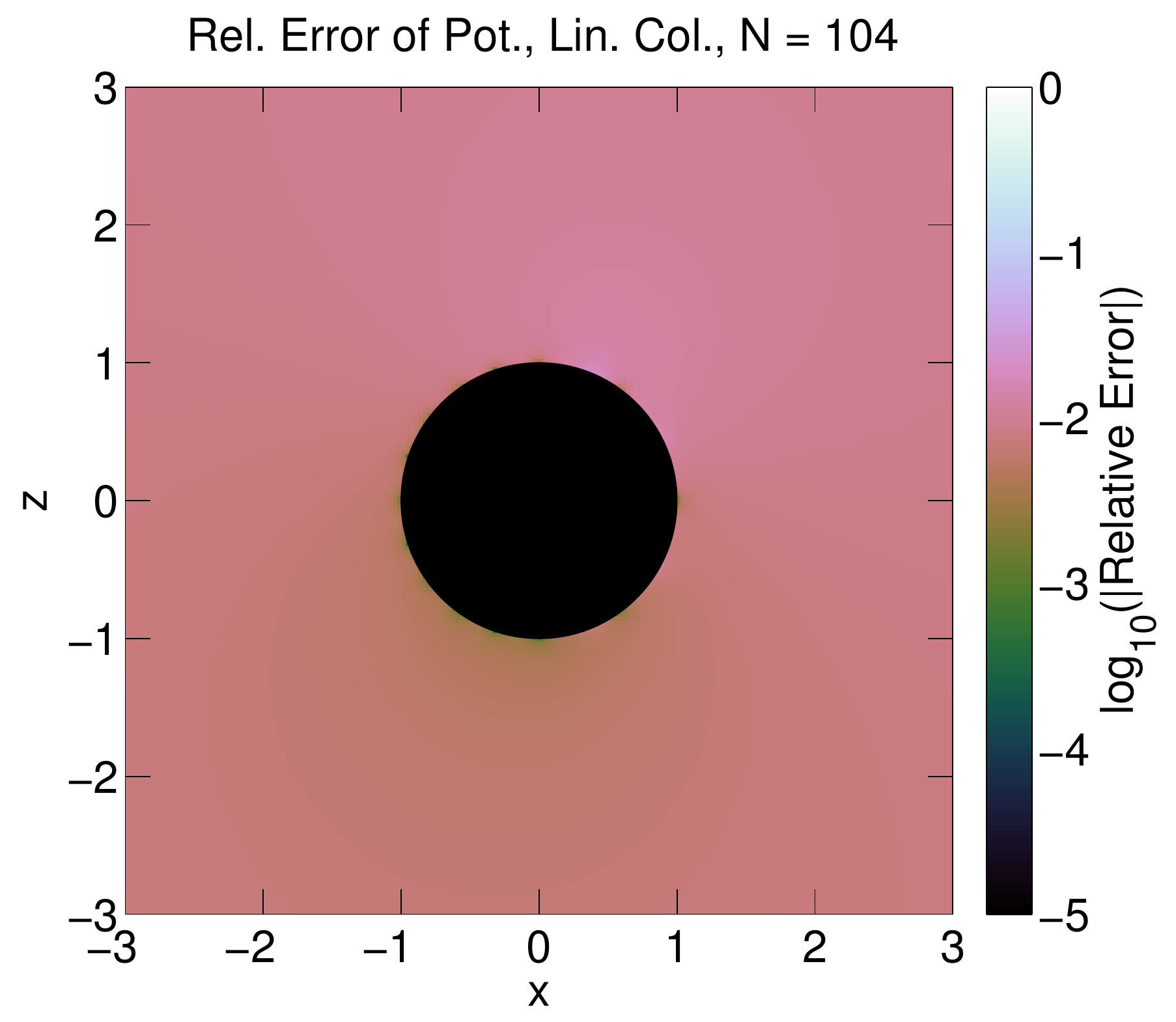}\\
	\includegraphics[scale=0.4]{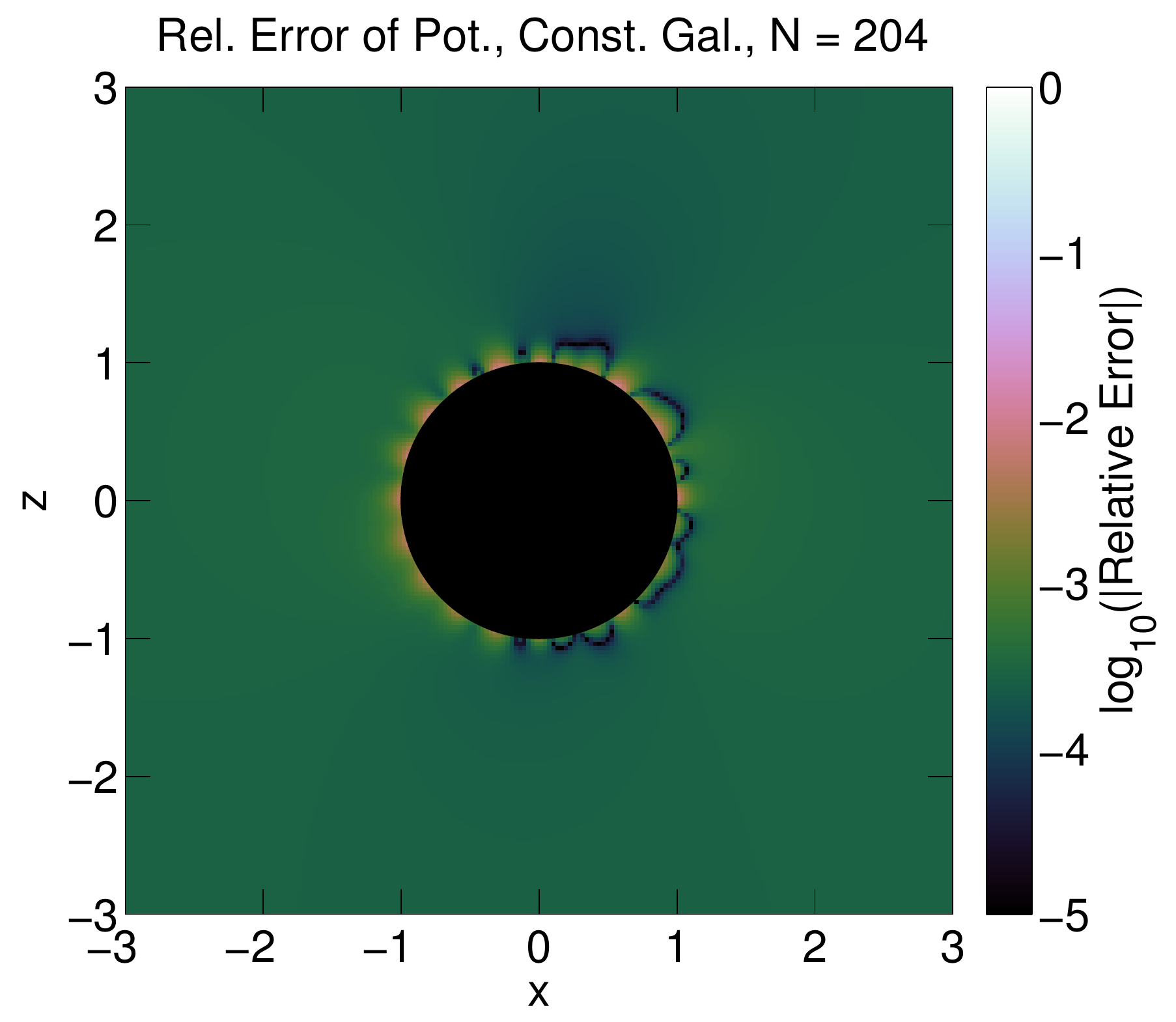}
	\includegraphics[scale=0.4]{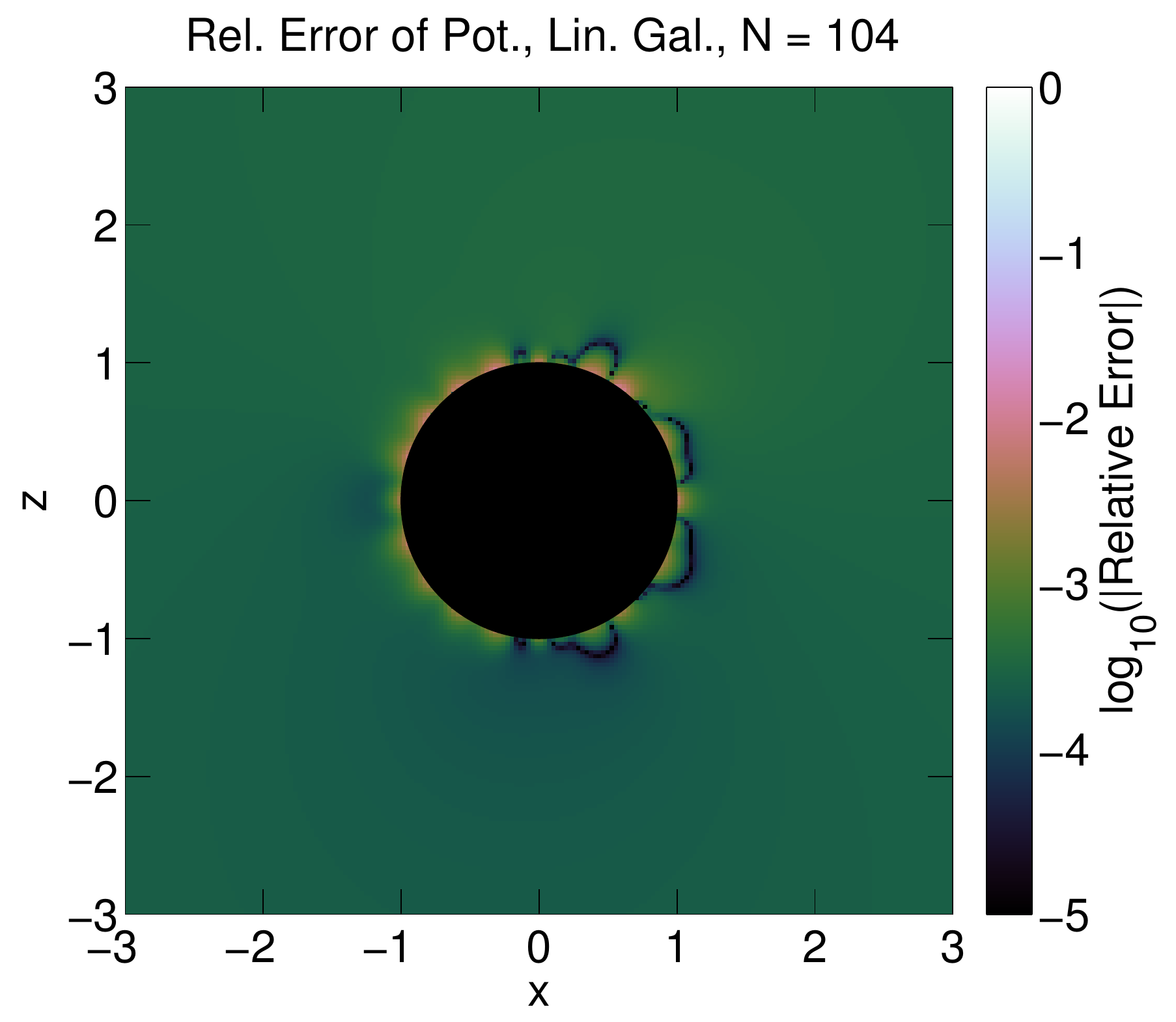}
	\caption{The relative error of the numerical solutions returned by the four solvers for the sphere test problem using the mesh seen in Fig.\ \ref{test_meshes}.}
	\label{test2_rel_error}
\end{figure}

\begin{figure}[t]
	\centering
	\includegraphics[scale=0.4]{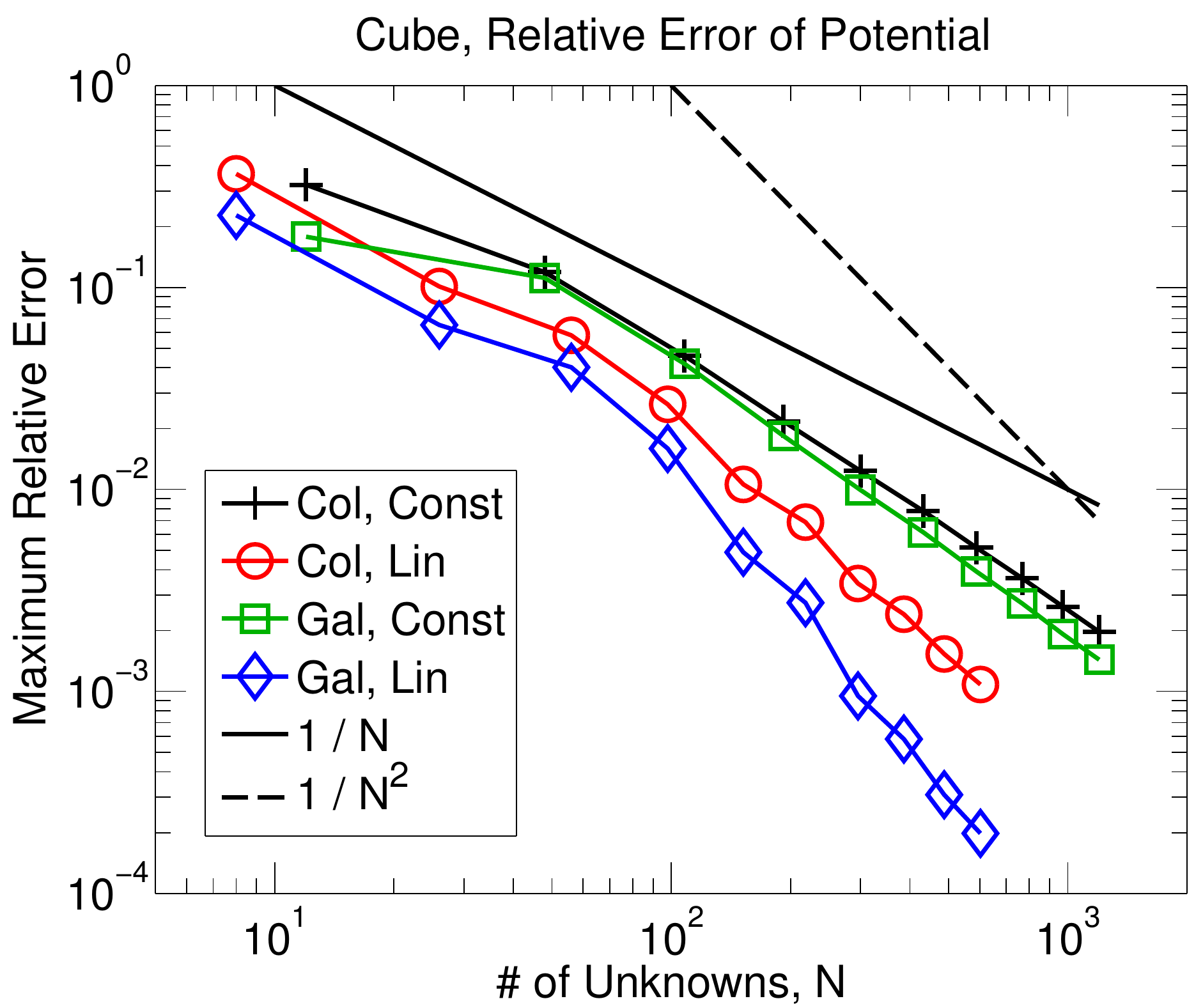}
	\includegraphics[scale=0.4]{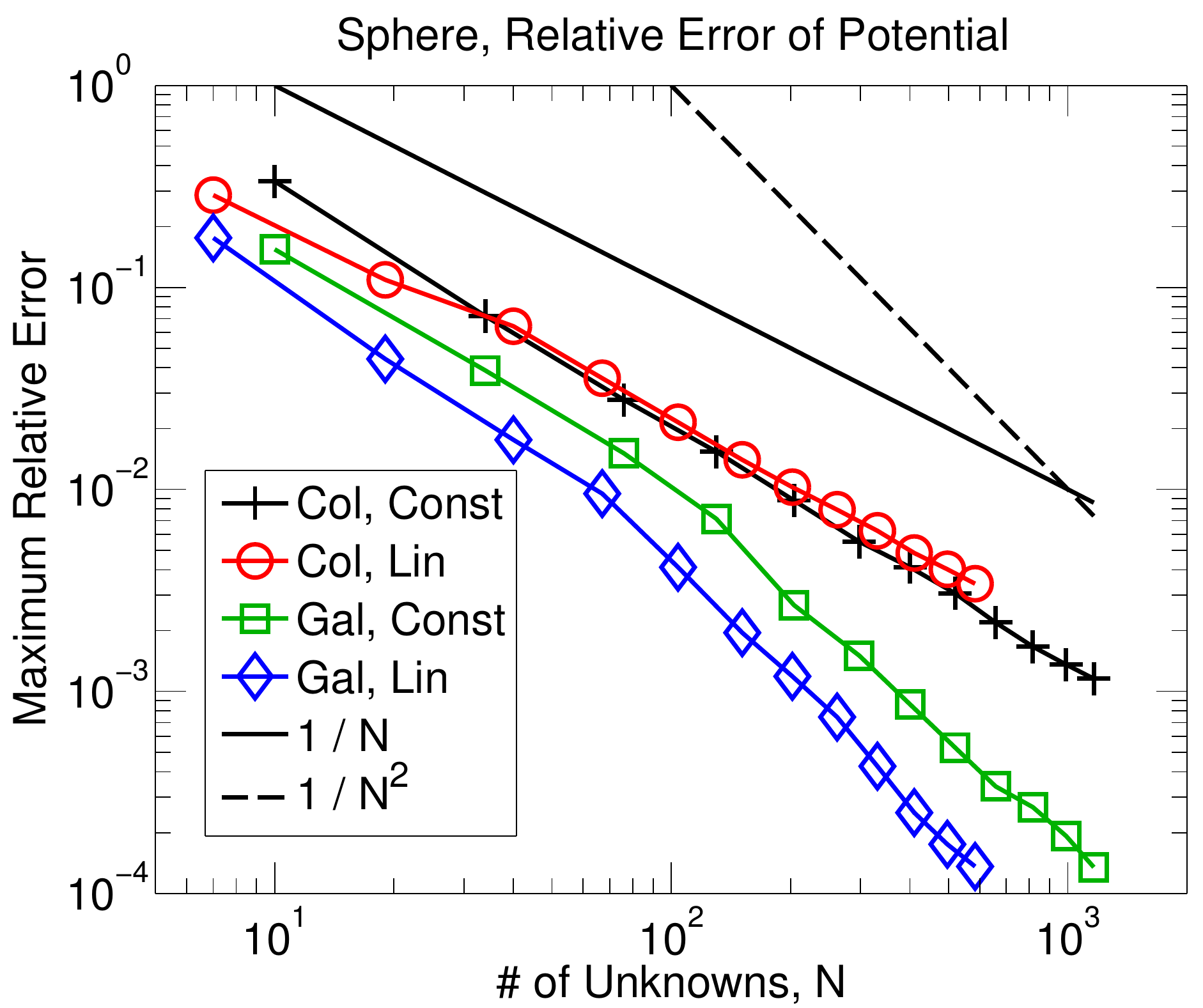}
	\caption{The maximum relative error as a function of the number of unknowns of the four solvers for the cube test problem (left) and the sphere test problem (right).}
	\label{test1_max_rel_error}
\end{figure}

First, we had to construct triangular meshes for the two test problems.
Fig.\ \ref{test_meshes} shows two example meshes, one for each test problem.
The meshes were constructed so that all the triangles were as close to equilateral and the same size as each other as possible.
In fact, in the cube mesh, all the triangles are congruent.

Second, we solved the two test problems using the four solvers.
The software performed very well for the cube test problem.
In particular, the linear Galerkin solver performed extremely well.
Fig.\ \ref{test1_rel_error} shows the relative error in the volume around the cube for each solver when the mesh had 300 triangles.
To show that the relative errors decrease as the mesh size increases, we computed and plotted the maximum relative error in the volume around the cube as a function of the number of unknowns, $N$ (see Fig.\ \ref{test1_max_rel_error}).
For constant triangular elements, the number of unknowns is the number of triangles.
For linear triangular elements, the number of unknowns is the number of vertices.
All four curves behave at least as good as $1 / N$.
In general, linear triangular elements are better than constant triangular elements, and the Galerkin methods are better than the collocation methods.
The linear Galerkin method behaves even better than $1 / N$ for larger $N$.
The nice thing about the cube test problem is that the boundary is piecewise planar, and so can be modeled exactly using planar triangular elements.
Thus, there are no errors from approximating the geometry of the boundary.

The software performed very well for the sphere test problem as well.
Fig.\ \ref{test2_rel_error} shows the relative error in the volume around the sphere for each solver when the mesh had 204 triangles.
Fig.\ \ref{test1_max_rel_error} shows the maximum relative error as a function of the number of unknowns.
Like the cube test problem, all four curves behave at least as good asas $1 / N$, and the linear Galerkin method is the best, decaying nearly as $1 / N^2$ for larger $N$.

\section{Conclusion}

We have presented a method for computing the double surface integrals encountered in the Galerkin BEM.
When the boundary is discretized using triangular elements, these integrals are performed over pairs of these triangles.
They can be extremely difficult to compute, especially when the two triangles share a vertex, an edge, or are the same.
This is because the kernels being integrated are often singular along the corners and edges of these triangles.
We have solved this problem by using several scaling properties of the integrals and the kernels being integrated.
The integral is broken up into several smaller ones, some of which are written in terms of the original.
This is done in such a way that only completely regular integrals have to be computed explicitly.

We have also presented an analytical method for computing the integrals when the two triangles do not touch.
The method uses spherical harmonics and multipole and local expansions and translations.
The only source of error in this method is how soon to truncate these expansions.
However, the truncation number is adaptively selected to achieve a desired error bound.

Finally, we have implemented the two methods as part of a Galerkin BEM library using MATLAB, and we have made this library freely available for download from our webpage.

\section{Acknowledgements}

Ross Adelman was supported under cooperative agreement W911NF1420118 between the Army Research Laboratory and the University of Maryland, with David Hull and Stephen Vinci as Technical Monitors.
Nail Gumerov and Ramani Duraiswami were partially supported by the same cooperative agreement, and partially by NSF award CMMI1250187.
This report is also available as University of Maryland Department of Computer Science Technical report CS-TR-5043, and Institute of Advanced Computer Studies Technical Report UMIACS-TR-2015-02.

\bibliographystyle{jasanum}
\bibliography{references}

\end{document}